\newcommand{\tit}{Phylogenetic Networks Do not Need to Be Complex:\\ Using Fewer Reticulations to Represent Conflicting Clusters}
\newcommand{\cass}{\textsc{Cass}}
\newcommand{\cset}{\ensuremath{\mathcal{C}}}
\newcolumntype{R}{>{\raggedleft\arraybackslash}X}
\begin{document}

\title{\tit}
\author{Leo van Iersel\inst{1}, Steven Kelk\inst{2}, Regula Rupp\inst{3} and Daniel Huson\inst{3}}
\institute{University of Canterbury, Department of Mathematics and Statistics,\\Private Bag 4800, Christchurch, New Zealand. l.j.j.v.iersel@gmail.com \\ \and Centrum voor Wiskunde en Informatica (CWI)\\ P.O. Box 94079,
1090 GB Amsterdam, The Netherlands. s.m.kelk@cwi.nl \\ \and Center for Bioinformatics ZBIT, T\"ubingen University\\ Sand 14, 72076 T\"ubingen, Germany. \{huson,rrupp\}@informatik.uni-tuebingen.de}
\maketitle

\begin{abstract}
Phylogenetic trees are widely used to display estimates of how groups of species evolved. Each phylogenetic tree can be seen as a collection of \emph{clusters}, subgroups of the species that evolved from a common ancestor. When phylogenetic trees are obtained for several data sets (e.g. for different genes), then their clusters are often contradicting. Consequently, the set of all clusters of such a data set cannot be combined into a single phylogenetic tree. \emph{Phylogenetic networks} are a generalization of phylogenetic trees that can be used to display more complex evolutionary histories, including \emph{reticulate events} such as hybridizations, recombinations and horizontal gene transfers. Here we present the new \textsc{Cass} algorithm that can combine any set of clusters into a phylogenetic network. We show that the networks constructed by \textsc{Cass} are usually simpler than networks constructed by other available methods. Moreover, we show that \textsc{Cass} is guaranteed to produce a network with at most two reticulations per biconnected component, whenever such a network exists. We have implemented \textsc{Cass} and integrated it in the freely available Dendroscope software.
\end{abstract}

\section{Introduction}
\emph{Phylogenetics} studies the reconstruction of evolutionary histories from genetic data of currently living organisms. A (rooted) \emph{phylogenetic tree} is a representation of such an evolutionary history in which species evolve by mutation and speciation. The leaves of the tree represent the species under consideration and the root of the tree represents their most recent common ancestor. Each internal node represents a speciation: one species splits into several new species. Thus, mathematically speaking, such a node has indegree one and outdegree at least two. In recent years, a lot of work has been done on developing methods for computing (rooted) \emph{phylogenetic networks}~\cite{Gambette2009,HusonRupp2009}, which form a generalization of phylogenetic trees. Next to nodes representing speciation, rooted phylogenetic networks can also contain \emph{reticulations}: nodes with indegree at least two. Such nodes can be used to represent recombinations, hybridizations or horizontal gene transfers, depending on the biological context. In addition, phylogenetic networks can also be interpreted in a more abstract sense, as a visualization of contradictory phylogenetic information in a single diagram.

Suppose we wish to investigate the evolution of a set~$\mathcal{X}$ of taxa (e.g. species or strains). Each edge of a rooted phylogenetic tree represents a \emph{cluster}: a proper subset of the taxon set~$\mathcal{X}$. In more detail, an edge~$(u,v)$ represents the cluster containing those taxa that are descendants of~$v$. Each phylogenetic tree~$T$ is uniquely defined by the set of clusters represented by~$T$. Phylogenetic networks also represent clusters. Each of their edges represents one ``hardwired'' and at least one ``softwired'' cluster. An edge~$(u,v)$ of a phylogenetic network \emph{represents} a cluster~$C\subset\mathcal{X}$ \emph{in the hardwired sense} if~$C$ equals the set of taxa that are descendants of~$v$. Furthermore,~$(u,v)$ \emph{represents}~$C$ \emph{in the softwired sense} if~$C$ equals the set of all taxa that can be reached from~$v$ when, for each reticulation~$r$, exactly one incoming edge of~$r$ is ``switched on'' and the other incoming edges of~$r$ are ``switched off''. An equivalent definition states that a phylogenetic network~$N$ \emph{represents} a cluster~$C$ \emph{in the softwired sense} if there exists a tree~$T$ that is displayed by~$N$ (formally defined below) and represents~$C$. In this paper we will always use ``represent'' in the softwired sense. It is usually the clusters in a tree that are of more interest, and less the actual trees themselves, as clusters represent putative monophyletic groups of related species. For a complete introduction to clusters see Huson and Rupp~\cite{HusonRupp2009}.

In phylogenetic analysis, it is common to compute phylogenetic trees for more than one data set. For example, a phylogenetic tree can be constructed for each gene separately, or several phylogenetic trees can be constructed using different methods. To accurately reconstruct the evolutionary history of all considered taxa, one would preferably like to use the set~$\mathcal{C}$ of all clusters represented by at least one of the constructed phylogenetic trees. In general however, some of the clusters of the different trees will be conflicting, which means that there will be no single phylogenetic tree representing~$\mathcal{C}$. Therefore, several recent publications have studied the construction of a phylogenetic \emph{network} representing~$\mathcal{C}$. Huson and Rupp~\cite{HusonRupp2008} describe how a phylogenetic network can be constructed that represents~$\mathcal{C}$ in the hardwired sense (a \emph{cluster network}). A network is a \emph{galled network} if it contains no path between two reticulations that is contained in a single \emph{biconnected component} (a maximal subgraph that cannot be disconnected by removing a single node). Huson and Kl\"opper~\cite{HusonKloepper2007} and Huson et al.~\cite{HusonEtAl2009} describe an algorithm for constructing a galled network representing~$\mathcal{C}$ in the softwired sense.

Related literature describes the construction of phylogenetic networks from phylogenetic trees or \emph{triplets} (phylogenetic trees on three taxa). A tree or triplet~$T$ is \emph{displayed} by a network~$N$ if there is a subgraph~$T'$ of~$N$ that is a subdivision of~$T$ (i.e.~$T'$ can be obtained from~$T$ by replacing edges by directed paths). Computing the minimum number of reticulations required in a phylogenetic network displaying two input trees (on the same set of taxa) was shown to be APX-hard by Bordewich and Semple~\cite{BordewichSemple2007}. Bordewich et al.~\cite{BordewichEtAl2007} proposed an exact exponential-time algorithm for this problem and Linz and Semple~\cite{LinzSemple2009} showed that it is fixed parameter tractable (FPT), if parameterized by the minimum number of reticulations. The downside of these algorithms is that they are very rigid in the sense that one generally needs very complex networks in order to display the given trees.

The \emph{level} of a binary network is the maximum number of reticulations in a biconnected component\footnote{In Section~\ref{sec:prelim} we generalize the notion of \emph{level} to non-binary networks.}, and thus provides a measure of network complexity. Given an arbitrary number of trees on the same set of taxa, Huynh et al.~\cite{HuynhEtAl2005} describe a polynomial-time algorithm that constructs a level-1 phylogenetic network that displays all trees and has a minimum number of reticulations, if such a network exists (which is unlikely in practice). Given a triplet for each combination of three taxa, Jansson, Sung and Nguyen~\cite{JanssonEtAl2006,JanssonSung2006} give a polynomial-time algorithm that constructs a level-1 network displaying all triplets, if such a network exists. The algorithm by van Iersel and Kelk~\cite{simplicity} can be used to find such a network that also minimizes the number of reticulations. These results have later been extended to level-2~\cite{TCBB2009,simplicity} and more recently to level-$k$, for all~$k\in\mathbb{N}$~\cite{ToHabib2009}. Although this work on triplets is theoretically interesting, it has the practical drawback that biologists are not interested in triplets (but rather in trees or clusters) and that these algorithms need a triplet for each combination of three taxa as input, while some triplets might be difficult to derive correctly.

\begin{figure}[t]
  \centering
  \begin{subfigure}[]
  {
    \centering
    \includegraphics[scale=0.5]{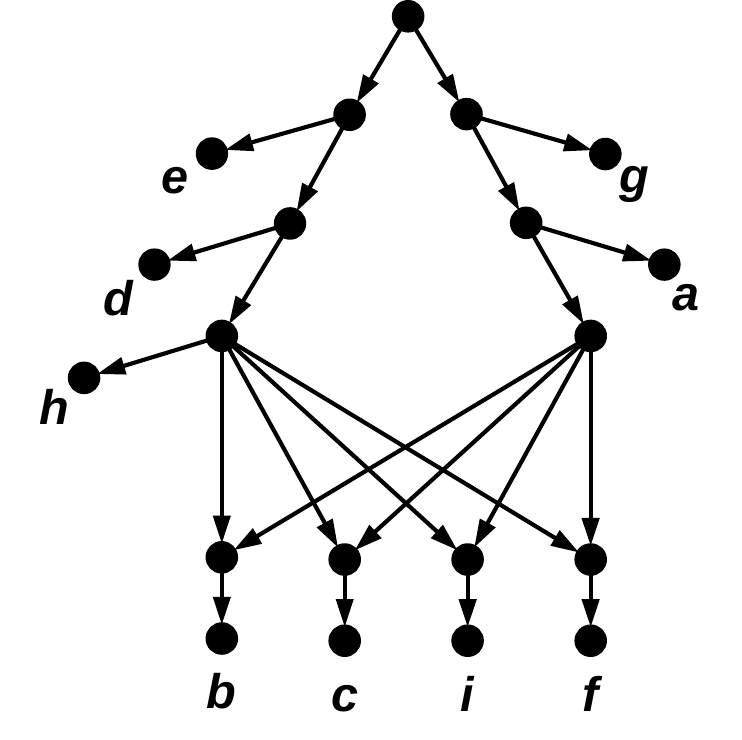}
    \label{fig:gallednetwork}
    \vspace{.5cm}
  }
  \end{subfigure}
  \hspace{3cm}
  \begin{subfigure}[]
  {
    \centering
    \includegraphics[scale=0.5]{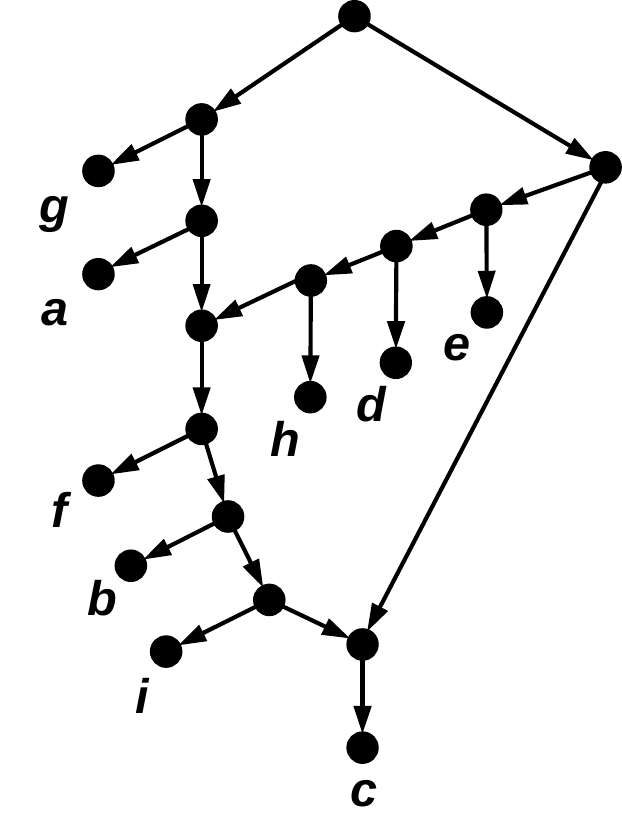}
    \hspace{1cm}
    \label{fig:level2example}
  }
  \end{subfigure}
  \caption{(a) The output of the galled network algorithm~\cite{HusonEtAl2009} for $\mathcal{C}=\{\{a,b,f,g,i\}$, $\{a,b,c,f,g,i\}$, $\{a,b,f,i\}$, $\{b,c,f,i\}$, $\{c,d,e,h\}$, $\{d,e,h\}$, $\{b,c,f,h,i\}$, $\{b,c,d,f,h,i\}$, $\{b,c,i\}$, $\{a,g\}$, $\{b,i\}$, $\{c,i\}$, $\{d,h\}\}$ and (b) the network constructed by \textsc{Cass} for the same input.}
  \label{fig:example}
\end{figure}
In this article, we present the algorithm \textsc{Cass}\footnote{Named after the Cass Field Station in New Zealand.}, which takes any set~$\mathcal{C}$ of clusters as input and constructs a phylogenetic network that represents~$\mathcal{C}$. Furthermore, the algorithm aims at minimizing the level of the constructed network and in this sense \textsc{Cass} is the first algorithm to combine the flexibility of clusters with the power of level minimization. \textsc{Cass} constructs a phylogenetic tree representing~$\mathcal{C}$ whenever such a tree exists. Moreover, we prove that \textsc{Cass} constructs a level-1 or level-2 network representing~$\mathcal{C}$ whenever there exists a level-1 or level-2 network representing~$\mathcal{C}$, respectively. Experimental results show that also when no level-2 network representing~$\mathcal{C}$ exists, \textsc{Cass} usually constructs a network with a significantly lower level and lower number of reticulations compared to other algorithms. In fact, we conjecture that similar arguments as in our proof for level-2 can be used to show that \textsc{Cass} always constructs a level-$k$ network with minimum~$k$. We prove a decomposition theorem for level-$k$ networks that supports this conjecture. Finally, we prove that \textsc{Cass} runs in polynomial time if the level of the output network is bounded by a constant.

We have implemented \textsc{Cass} and added it to our popular tree-drawing program Dendroscope~\cite{HusonEtAl2007}, where it can be used as an alternative for the cluster network~\cite{HusonRupp2008} and galled network~\cite{HusonEtAl2009} algorithms. Experiments show that, although \textsc{Cass} needs more time than these other algorithms, it constructs a simpler network representing the same set of clusters. For example, Figure~\ref{fig:gallednetwork} shows a set of clusters and the galled network with four reticulations constructed by the algorithm in~\cite{HusonEtAl2009}. However, for this data set also a level-2 network with two reticulations exists, and \textsc{Cass} can be used to find this network, see Figure~\ref{fig:level2example}. Dendroscope now combines the powers of \textsc{Cass} and the two previously existing algorithms for constructing galled- and cluster networks.

\section{${\mbox{Level-}k}$ Networks and Clusters}\label{sec:prelim}
Consider a set~$\mathcal{X}$ of taxa. A \emph{(phylogenetic) network} (on~$\mathcal{X}$) is a directed acyclic graph with a single root and leaves bijectively labeled by~$\mathcal{X}$. The indegree of a node~$v$ is denoted~$\delta^-(v)$ and~$v$ is called a \emph{reticulation} if~$\delta^-(v)\geq 2$. An edge~$(u,v)$ is called a \emph{reticulation edge} if its head~$v$ is a reticulation and is called a \emph{tree edge} otherwise. The \emph{reticulation number} of a phylogenetic network~$N=(V,E)$ is defined as
\[\sum_{\substack{v\in V: \delta^-(v)>0}}(\delta^-(v)-1) = |E| - |V| + 1 \enspace.\]

A directed acyclic graph is \emph{connected} (also called ``weakly connected'') if there is an
undirected path (ignoring edge orientations) between each pair of nodes. A node (edge) of a directed graph is called a \emph{cut-node} (\emph{cut-edge}) if its removal disconnects the graph. A directed graph is \emph{biconnected} if it contains no cut-nodes. A biconnected subgraph~$B$ of a directed graph~$G$ is said to be a \emph{biconnected component} if there is no biconnected subgraph~$B' \neq B$ of $G$ that contains~$B$.

A phylogenetic network is said to be a \emph{${\mbox{level-}k}$ network} if each biconnected component has reticulation number at most~$k$.\footnote{Note that to determine the reticulation number of a biconnected component one only counts edges inside this biconnected component.} A phylogenetic network is called \emph{binary} if each node has either indegree at most one and outdegree at most two or indegree at most two and outdegree at most one. Note that the above definition of level generalizes the original definition~\cite{ChoyEtAl2005} for binary networks. A ${\mbox{level-}k}$ network is called a \emph{simple level-$\leq k$ network} if the head of each cut-edge is a leaf. A simple level-$\leq k$ network is called a \emph{simple level-$k$ network} if its reticulation number is precisely~$k$. A \emph{phylogenetic tree} (on~$\mathcal{X}$) is a phylogenetic network (on~$\mathcal{X}$) without reticulations, i.e. a level-0 network.

Consider a set of taxa~$\mathcal{X}$. Proper subsets of~$\mathcal{X}$ are called \emph{clusters}. We say that two clusters~$C_1,C_2\subset\mathcal{X}$ are \emph{compatible} if either~$C_1\cap C_2=\emptyset$ or~$C_1\subset C_2$ or~$C_2\subset C_1$. Consider a set of clusters~$\mathcal{C}$. We say that a set of taxa~$X\subset\mathcal{X}$ is \emph{separated} (by~$\mathcal{C}$) if there exists a cluster~$C\in\mathcal{C}$ that is incompatible with~$X$. The \emph{incompatibility graph}~$IG(\mathcal{C})$ of~$\mathcal{C}$ is the undirected graph~$(V,E)$ that has node set~$V=\mathcal{C}$ and edge set
\[{E=\{\{C_1,C_2\}\enspace |\enspace C_1 \mbox{ and } C_2\mbox{ are incompatible clusters in } \mathcal{C}\}}\enspace . \]

\section{Decomposing ${\mbox{Level-}k}$ Networks}\label{sec:decomp}
In this section, we describe the general outline of our algorithm \textsc{Cass}. We show how the problem of determining a ${\mbox{level-}k}$ network can be decomposed into a set of smaller problems by examining the incompatibility graph. Our algorithm will first construct a simple level-$\leq k$ network for each connected component of the incompatibility graph and subsequently merge these simple level-$\leq k$ networks into a single ${\mbox{level-}k}$ network on all taxa.

Consider a set of taxa~$\mathcal{X}$ and a set~$\mathcal{C}$ of input clusters. We assume that all singletons (sets $\{x\}$ with~$x\in \mathcal{X}$) are clusters in~$\mathcal{C}$. Our algorithm proceeds as follows.

\noindent\textbf{Step 1.} Find the nontrivial connected components~$\mathcal{C}_1,\ldots,\mathcal{C}_p$ of the incompatibility graph~$IG(\mathcal{C})$. For each~$i\in\{1,\ldots,p\}$, let~${\mathcal{C}_{i}}'$ be the result of collapsing unseparated sets of taxa as follows. Let $\mathcal{X}_i=\bigcup_{C\in \mathcal{C}_i} C$. For each maximal subset~$X\subset\mathcal{X}_i$ that is not separated by~$\mathcal{C}_i$, replace, in each cluster in~$\mathcal{C}_i$, the elements of~$X$ by a single new taxon~$X$, e.g. if~$X=\{b,c\}$ then a cluster~$\{a,b,c,d\}$ is modified to~$\{a, \{b,c\}, d\}$.

\noindent\textbf{Step 2.} For each~${i\in\{1,\ldots,p\}}$, construct a simple level-$\leq k$ network~$N_i$ representing~${\mathcal{C}_{i}}'$.

\noindent\textbf{Step 3.} Let~$\mathcal{C}^*$ be the result of applying the following modifications to~$\mathcal{C}$, for each~$i\in\{1,\ldots,p\}$: remove all clusters that are in~$\mathcal{C}_i$, add a cluster~$\mathcal{X}_i$ and add each maximal subset~$X\subset\mathcal{X}_i$ that is not separated by~$\mathcal{C}_i$. Construct the unique phylogenetic tree~$T$ on~$\mathcal{X}$ representing precisely those clusters in~$\mathcal{C}^*$.

\noindent\textbf{Step 4.} For each~$i\in\{1,\ldots,p\}$, replace in~$T$ the lowest common ancestor~$v_i$ of~$\mathcal{X}_i$ by the simple level-$\leq k$ network~$N_i$ as follows. Delete all edges leaving~$v_i$ and merge~$T$ with~$N_i$ by identifying the root of~$N_i$ with~$v_i$ and identifying each leaf of~$N_i$ labeled~$X$ by the lowest common ancestor of the leaves labeled~$X$ in~$T$. Output the resulting network.

Notice that Steps~1,3 and 4 are similar to the corresponding steps in algorithms for constructing galled trees (i.e. level-1 networks) and galled networks~\cite{HusonKloepper2007,HusonRupp2009,HusonEtAl2009}. The reason why we use the same set-up in our algorithm, is outlined by Theorem~\ref{thm:decomp}. It shows that, when constructing a ${\mbox{level-}k}$ network displaying a set of clusters, we can restrict our attention to ${\mbox{level-}k}$ networks that satisfy the \emph{decomposition property}~\cite{HusonRupp2009}, the definition of which we repeat below.

Because a cluster~$C\in\mathcal{C}$ can be represented by more than one edge in a network~$N$, an \emph{edge assignment}~$\epsilon$ is defined as a mapping that chooses for each cluster~$C\in\mathcal{C}$ a single tree edge~$\epsilon(C)$ of~$N$ that represents~$C$. A network~$N$ representing~$\mathcal{C}$ is said to satisfy the \emph{decomposition property} w.r.t.~$\mathcal{C}$ if there exists an edge assignment~$\epsilon$ such that:
\begin{itemize}
\item[$\bullet$] for any two clusters~${C_1,C_2\in\mathcal{C}}$, the edges~$\epsilon(C_1)$ and~$\epsilon(C_2)$ are contained in the same biconnected component of~$N$ if and only if~$C_1$ and~$C_2$ lie in the same connected component of the incompatibility graph~$IG(\mathcal{C})$.
\end{itemize}
\begin{theorem}\label{thm:decomp} Let~$\mathcal{C}$ be a set of clusters. If there exists a ${\mbox{level-}k}$ network representing~$\mathcal{C}$, then there also exists such a network satisfying the decomposition property w.r.t.~$\mathcal{C}$.
\end{theorem}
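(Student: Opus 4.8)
The plan is to reduce the biconditional to its nontrivial direction, and then to obtain a suitable network by surgery on a given one.

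First I would record an elementary fact about switchings: if a subgraph $M$ of a phylogenetic network is \emph{entered only through its root} $\rho$ --- that is, $\rho$ is a source of $M$ and every edge with head in $M$ and tail outside $M$ has head $\rho$ --- then under every switching every node of $M$ is reachable from $\rho$ along switched-on edges. (Induct along a topological order of $M$: a node $w\neq\rho$ has an \emph{active} parent --- its unique parent if $w$ is a tree node, its switched-on parent if $w$ is a reticulation --- and this parent lies in $M$.) From this I would deduce the key lemma: \emph{if tree edges $e_1,e_2$ of a phylogenetic network represent, in the softwired sense, incompatible clusters $C_1,C_2$, then $e_1$ and $e_2$ lie in the same biconnected component}. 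Indeed, otherwise there is, after possibly swapping $e_1$ and $e_2$, a cut-node $c$ for which $e_2$ lies in a component $D$ of $N-c$ containing neither $e_1$ nor the root of $N$; then $C_2\subseteq X_D:=\mathcal{X}\cap D$, while $D\cup\{c\}$ is entered only through $c$, so the switching witnessing ``$e_1$ represents $C_1$'' either does not reach $c$ from the tail of $e_1$ (whence $C_1\cap X_D=\emptyset$) or reaches it (whence, by the elementary fact, $C_1\supseteq X_D\supseteq C_2$); either way $C_1$ and $C_2$ would be compatible.

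Since ``lying in the same biconnected component'' is an equivalence relation on edges while connectivity in $IG(\mathcal{C})$ is generated by adjacency, the lemma shows that for any level-$k$ network $N$ representing $\mathcal{C}$ and any edge assignment $\epsilon$, all edges $\epsilon(C)$ with $C$ in a fixed nontrivial component $\mathcal{C}_i$ of $IG(\mathcal{C})$ lie in one biconnected component $B_i$; hence the implication ``same component of $IG(\mathcal{C})\Rightarrow$ same biconnected component'' is automatic, and it remains only to build a level-$k$ network on which $\epsilon$ can in addition be chosen so that distinct components of $IG(\mathcal{C})$ land in distinct biconnected components and every other (``leftover'') cluster is represented outside every $B_i$. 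I would obtain this by reshaping $N$. Below $B_i$, everything not in $B_i$ hangs off through cut-edges; the elementary fact forces every cluster represented inside $B_i$ to treat the taxon set of each such pendant subnetwork as indivisible, and --- using the maximality of the unseparated sets --- that each cluster of $\mathcal{C}_i$ is a union of maximal subsets of $\mathcal{X}_i:=\bigcup\mathcal{C}_i$ not separated by $\mathcal{C}_i$, while no leftover cluster is a union of two or more such sets. Grouping the pendant subnetworks below $B_i$ according to these maximal unseparated sets and collapsing each group to a single new leaf --- exactly the collapsing of Step~1 --- turns $B_i$ into a simple level-$\le k$ network $N_i$ representing the collapsed version of $\mathcal{C}_i$. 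I then rebuild $N$ by substituting a copy of each $N_i$ for its $B_i$ and re-attaching, below the new leaves and through cut-edges, the original pendant subnetworks (arranged as a small tree reflecting the leftover clusters inside that group). The resulting $N'$ is again level-$k$ --- its only nontrivial biconnected components are the copies of the $N_i$, each of reticulation number at most $k$ --- and, by another application of the elementary fact, still represents every cluster of $\mathcal{C}$: a cluster of $\mathcal{C}_i$ is recovered inside the copy of $N_i$ together with the re-attached subnetworks, while every leftover cluster sits inside a re-attached subnetwork, on a cut-edge, or on a tree edge outside every $B_i$. Taking $\epsilon'$ to map clusters of $\mathcal{C}_i$ into the copy of $N_i$ and leftover clusters to such non-$B_i$ edges then yields the decomposition property.

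The hard part will be the softwired bookkeeping inside this reshaping. Because one cluster may be represented by several edges under several switchings, and because $\bigcup\mathcal{C}_i$ need be neither a cluster of $\mathcal{C}$ nor laminar with respect to the remaining clusters, one must check carefully that collapsing a group of pendant subnetworks to a leaf destroys no softwired cluster needed for $\mathcal{C}_i$ and manufactures no spurious one, that the grouping is consistent for all clusters at once, and that re-attaching the pendant subnetworks restores exactly the right taxa under every switching. This is precisely the ``collapsing of unseparated sets'' combinatorics underlying Step~1 of the algorithm, and carrying it out cleanly --- rather than devising the overall strategy --- is where the real effort lies.
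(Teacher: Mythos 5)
Your proposal is correct and follows essentially the same route as the paper: restrict the given level-$k$ network to each nontrivial component $\mathcal{C}_i$ of the incompatibility graph, collapse the maximal subsets of $\mathcal{X}_i$ not separated by $\mathcal{C}_i$ into single leaves to obtain a simple level-$\le k$ network $N_i$, and reassemble the $N_i$ along the backbone tree of Steps~3--4. Your explicit key lemma (tree edges representing incompatible clusters lie in one biconnected component) is left implicit in the paper, but the surgery, and the softwired bookkeeping you defer at the end, are exactly those of the paper's own (equally terse) proof.
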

\begin{proof}
Let~$\mathcal{C}$ be a set of input clusters and~$N$ a ${\mbox{level-}k}$ network representing~$\mathcal{C}$. Let~$\mathcal{C}_1,\ldots,\mathcal{C}_p$ be the nontrivial connected components of the incompatibility graph~$IG(\mathcal{C})$. For each~$i\in\{1,\ldots,p\}$, we construct a simple level-$\leq k$ network~$N_i$ as follows. Let $\mathcal{X}_i=\bigcup_{C\in \mathcal{C}_i} C$ as before. For each maximal subset~$X\subset\mathcal{X}_i$ (with~$|X|>1$) that is not separated by~$\mathcal{C}_i$, replace in~$N$ an arbitrary leaf labeled by an element of~$X$ by a leaf labeled~$X$ and remove all other leaves labeled by elements of~$X$. In addition, remove all leaves with labels that are not in~$X_i$. We tidy up the resulting graph by repeatedly applying the following five steps until none is applicable: (1) delete unlabeled nodes with outdegree 0; (2) suppress nodes with indegree and outdegree 1 (i.e. contract one edge incident to the node); (3) replace multiple edges by single edges, (4) remove the root if it has outdegree 1 and (5) contract biconnected components that have only one outgoing edge. This leads to a ${\mbox{level-}k}$ network~$N_i$. Let~${\mathcal{C}_{i}}'$ be defined as in Step~1 of the algorithm. By its construction,~$N_i$ represents~${\mathcal{C}_i}'$. Furthermore,~$N_i$ is a simple level-$\leq k$ network, because if it would contain a cut-edge~$e$ whose head is not a leaf, then the set of taxa labeling leaves reachable from~$e$ would not be separated by~${\mathcal{C}_{i}}'$ and would hence have been collapsed. Finally, the networks~$N_1,\ldots,N_p$ can be merged into a ${\mbox{level-}k}$ network representing~$\mathcal{C}$ and satisfying the decomposition property by executing Steps~3~and~4 of the algorithm. \qed
\end{proof}
Note that the analogous statement obtained by replacing ``${\mbox{level-}k}$ network'' by `` network with~$k$ reticulations'' does not hold, as shown in~\cite{HusonEtAl2009}, based on~\cite{GusfieldEtAl2007}.
\section{Simple Level-$k$ Networks}\label{sec:SLk}
This section describes how one can construct a simple level-$k$ network representing a given set of clusters. We say that a phylogenetic tree~$T$ is a \emph{strict subtree} of a network~$N$ if~$T$ is a subgraph of~$N$ and for each node~$v$ of~$T$, except its root, it holds that the in- and outdegree of~$v$ in~$T$ are equal to the in- and (respectively) outdegree of~$v$ in~$N$.

Informally, our method for constructing simple level-$k$ networks operates as follows. We loop through all taxa~$x$. For each choice for~$x$, we remove it from each cluster and subsequently collapse all maximal ``ST-sets'' (``strict tree sets'', defined below) of the resulting cluster set. We repeat this step~$k$ times. The idea behind this strategy is as follows. Observe that any simple level-$k$ network~$N$ contains a leaf whose parent is a reticulation. If we would remove this leaf and reticulation from~$N$, the resulting network might contain one or more strict subtrees. Each such strict subtree corresponds to an ST-set. Moreover, for the case~$k\leq 2$ we prove that (without loss of generality) each maximal strict subtree corresponds to a maximal ST-set. Collapsing each maximal strict subtree of the network would lead to a (not necessarily simple) ${\mbox{level-}(k-1)}$ network, which would again contain a leaf whose parent is a reticulation. It follows that we can indeed repeat the described steps~$k$ times, after which all leaves will be collapsed into just two taxa and the second phase of the algorithm starts.

We create a network consisting of a root with two children, labeled by the only two taxa. Then we ``decollapse'', i.e. we replace each leaf labeled by an ST-set by a strict subtree. Subsequently we add a new leaf below a new reticulation and label it by the latest removed taxon. Since we do not know where to create the new reticulation, we try adding the reticulation below each pair of edges. For each constructed simple level-$k$ network, we check whether it represents all input clusters. If it does, we output the resulting network, after contracting any edges that connect two reticulations.

\begin{figure}[t]
  \centering
  \includegraphics[width=.75\textwidth]{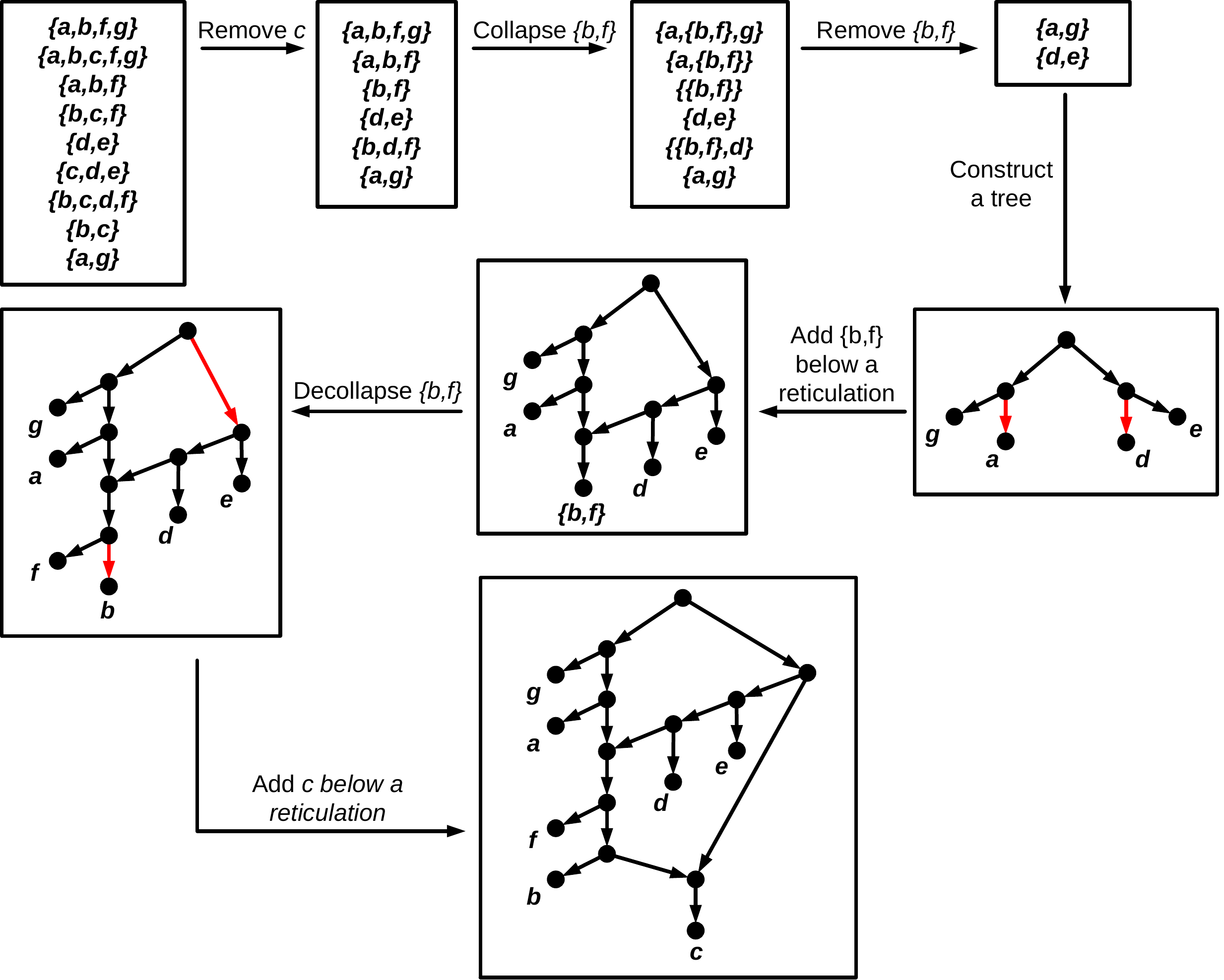}
  \caption{Construction of a simple level-2 network by the \textsc{Cass} algorithm. The edges~$e_1,e_2$ that will be subdivided are colored red. Singleton clusters have been omitted, as well as the last collapse-step, for simplicity.}
  \label{fig:fullexample}
\end{figure}
Let us now formalize this algorithm. Given a set~$S\subseteq\mathcal{X}$ of taxa, we use~$\mathcal{C}\setminus S$ to denote the result of removing all elements of~$S$ from each cluster in~$\mathcal{C}$ and we use~$\mathcal{C}|S$ to denote~${\mathcal{C}\setminus (\mathcal{X}\setminus S)}$ (the restriction of~$\mathcal{C}$ to~$S$). We say that a set~$S\neq\mathcal{X}$ is an \emph{ST-set} (strict tree set) w.r.t.~$\mathcal{C}$, if~$S$ is not separated by~$\mathcal{C}$ and any two clusters~$C_1,C_2\in\mathcal{C}|S$ are compatible. An ST-set~$S$ is \emph{maximal} if there is no ST-set~$T$ with~$S\subsetneq T$. Informally, the maximal ST-sets are the result of repeatedly collapsing pairs of unseparated taxa as long as possible.

We use~\textsc{Collapse}$(\mathcal{C})$ to denote the result of collapsing each maximal ST-set~$S$ into a single taxon~$S$. More precisely, for each cluster~$C\in\mathcal{C}$ and maximal ST-set~$S$ of~$\mathcal{C}$, we replace~$C$ by~${C\setminus S\cup\{\{S\}\}}$. For example (omitting singleton clusters), if
\[ \mathcal{C}=\{\enspace\{1,2\},\enspace\{2,3,4\},\enspace\{3,4\}\enspace\}\enspace ,\]
then~$\{3,4\}$ is the only nonsingleton maximal ST-set and
\[\mbox{\textsc{Collapse}}(\mathcal{C})=\{\enspace\{1,2\},\enspace\{2,\{3,4\}\}\enspace\}\enspace .\]
The set of taxa of a (collapsed) cluster set~$\mathcal{C}$ is denoted~$\mathcal{X}(\mathcal{C})$. Thus, for the above example, $\mathcal{X}(\mbox{\textsc{Collapse}}(\mathcal{C}))=\{1,2,\{3,4\}\}$. We are now ready to give the pseudocode of {\cass}$(k)$ in Algorithm~\ref{alg:SLk}. The actual implementation is slightly more complex and much more space efficient.

\begin{algorithm}[t]\caption{{\cass}$(k)$: constructing a simple level-$k$ network from clusters}\label{alg:SLk}
\begin{algorithmic} [1]
\STATE\textbf{input} $(\mathcal{C},\mathcal{X},k,k')$
\STATE\textbf{output} \textsc{Cass}$(\mathcal{C},\mathcal{X},k,k')$
\STATE // in the initial call to the algorithm,~$k'=k$
\STATE $\mathcal{N}:=\emptyset$
\IF{$k'=0$}
\STATE \textbf{return} the unique tree representing exactly those clusters in~$\mathcal{C}$ or return~$\emptyset$ if no such tree exists
\ENDIF
\FOR{$x\in\mathcal{X}\cup\{d\}$}
\STATE // $d$ is a dummy taxon not in~$\mathcal{X}$
\STATE \textbf{remove leaf:} $\mathcal{C}':=\mathcal{C}\setminus\{x\}$
\STATE \textbf{collapse:} $\mathcal{C}'':=\mbox{\sc{Collapse}}(\mathcal{C}')$
\STATE \textbf{recurse:} $\mathcal{N}':=$ \textsc{Cass}$(\mathcal{C}'',\mathcal{X}(\mathcal{C}''),k,k'-1)$
\FOR{$N'\in\mathcal{N}'$}
\STATE \textbf{decollapse:} replace each leaf of~$N'$ labeled by a maximal ST-set~$S$ w.r.t.~$\mathcal{C}'$ by the tree on~$S$ representing exactly those clusters in~$\mathcal{C}'|S$
\FOR{each pair of edges~$e_1,e_2$}
\STATE \textbf{add leaf below reticulation:} create a reticulation~$t$, a leaf~$l$ labeled~$x$ and an edge from~$t$ to~$l$;
\STATE for~$i=1,2$, insert a node~$v_i$ into~$e_i$ and add an edge from~$v_i$ to~$t$, this gives network~$N$
\IF{$N$ represents~$\mathcal{C}$}
\STATE \textbf{save network:} $\mathcal{N}:=\mathcal{N}\cup\{N\}$
\ENDIF
\ENDFOR
\ENDFOR
\ENDFOR
\IF{$k=k'$}
\STATE\textbf{return} any simple level-$k$ network in $\mathcal{N}$, after removing each leaf labeled~$d$ and contracting each edge connecting two reticulations
\ELSE
\STATE\textbf{return} $\mathcal{N}$
\ENDIF
\end{algorithmic}
\end{algorithm}
Figure~\ref{fig:fullexample} shows how the {\cass}$(2)$ algorithm for example constructs a simple level-2 network. We will now show that {\cass}$(1)$ and {\cass}$(2)$ will indeed construct a simple level-1 respectively level-2 network whenever this is possible.

\begin{lemma}\label{lem:level2} Given a set of clusters~$\mathcal{C}$, such that~$IG(\mathcal{C})$ is connected and any~$X\subsetneq\mathcal{X}$ is separated, {\cass}$(1)$ and {\cass}$(2)$ construct a simple level-1 respectively a simple level-2 network representing~$\mathcal{C}$, if such a network exists.
\end{lemma}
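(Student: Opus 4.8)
The plan is to prove the two cases ($k=1$ and $k=2$) by a common induction on $k'$, showing that on each recursive branch where \cass\ ``guesses correctly'' it reconstructs a simple level-$k$ network. The key structural fact I would isolate first is this: if $N$ is a simple level-$k$ network representing $\mathcal{C}$ (with $IG(\mathcal{C})$ connected and every proper subset separated, so that $N$ is non-trivial), then $N$ has a leaf $x$ whose parent is a reticulation $r$. Removing $x$ and suppressing/contracting as in the tidy-up of Theorem~\ref{thm:decomp} yields a network $N^-$ with reticulation number $k-1$ that represents $\mathcal{C}\setminus\{x\}$. I would then argue that $N^-$ decomposes into at most $k$ maximal strict subtrees hanging below the ``remaining'' reticulation edges, and that collapsing these strict subtrees yields a \emph{simple} level-$(k-1)$ network $N''$ representing $\mathcal{C}'' = \textsc{Collapse}(\mathcal{C}\setminus\{x\})$. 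This is exactly the object the recursive call \textsc{Cass}$(\mathcal{C}'',\mathcal{X}(\mathcal{C}''),k,k'-1)$ is supposed to (re)build, so by induction it is in the returned set $\mathcal{N}'$; decollapsing it and re-inserting the reticulation below the correct pair of edges $e_1,e_2$ recovers $N$ (up to contracting reticulation-reticulation edges), and since the algorithm tries \emph{all} leaves $x$, all strict-subtree decollapsings, and all edge pairs, it will in particular hit this choice and verify (via the ``$N$ represents $\mathcal{C}$'' test) that the result works.

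The step I expect to be the genuine obstacle is the claim that the maximal strict subtrees of $N^-$ correspond exactly to the maximal ST-sets of $\mathcal{C}'=\mathcal{C}\setminus\{x\}$ — in particular the direction that \emph{every} maximal ST-set is realized as a single strict subtree, so that \textsc{Collapse} does not over-collapse relative to what the network supports. A priori a maximal ST-set $S$ could be spread across the network in a way that is not a connected strict subtree below one edge, which would mean $N''$ is not obtained from $N^-$ simply by contracting subtrees and the induction breaks. This is precisely where the restriction $k\le 2$ is used: with at most two reticulations in the (biconnected) component one can enumerate the possible positions of the leaf-below-reticulation and of the remaining reticulation edges and check, by a short case analysis, that after removing $x$ the pieces that become ``tree-like'' are exactly unions of maximal ST-sets, and that one may assume w.l.o.g.\ (possibly after re-choosing which leaf to call $x$, or re-choosing the representative leaf inside an ST-set) that each maximal strict subtree is a single maximal ST-set. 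I would prove this as a separate lemma, handling $k=1$ first (where it is almost immediate: a simple level-1 network is a single cycle with pendant subtrees, removing the reticulation's leaf opens the cycle into a tree, and the pendant subtrees are the maximal ST-sets) and then $k=2$ by enumerating the two topologically distinct simple level-2 ``skeletons''.

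With that lemma in hand the rest is bookkeeping. The base case $k'=0$ is handled by the algorithm directly: when every cluster has been made compatible through $k$ rounds of removal-and-collapse, the surviving cluster set is a tree-compatible family on two taxa and line~6 returns the unique tree; conversely if no level-$k$ network existed the recursion would return $\emptyset$ on every branch, and since \cass\ only outputs a network that passes the explicit ``represents $\mathcal{C}$'' check, correctness of a positive output is immediate and the content of the lemma is the \emph{completeness} direction established by the induction above. I would also note the two cosmetic final operations — deleting the dummy taxon $d$ and contracting reticulation-reticulation edges — are harmless: $d$ is only ever used to open up an extra degree of freedom when the ``right'' leaf to remove would otherwise leave too few taxa, and contracting an edge between two reticulations does not change the set of displayed trees nor the reticulation number of the biconnected component, so the output is still a simple level-$k$ network representing $\mathcal{C}$. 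Finally I would remark that the induction hypothesis must be stated for the \emph{set}-valued returns $\mathcal{N}'$ (``every simple level-$(k-1)$-or-less network representing $\mathcal{C}''$ and satisfying the relevant strict-subtree condition is in $\mathcal{N}'$''), not just for the existence of one, since in the recursive step we need the specific network $N''$ to be present, not merely some network of the right level.
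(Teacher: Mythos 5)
Your proposal follows essentially the same route as the paper: the condition you isolate as the crux --- that one may pass to a simple level-$k$ network in which the maximal strict subtrees of $N\setminus\{x\}$ are exactly the maximal ST-sets of $\mathcal{C}\setminus\{x\}$ --- is precisely the paper's notion of a \emph{drooped} network, and the paper establishes it exactly as you propose, by a case analysis over the simple level-1 and level-2 generators (split into the cases where $X_r$ is or is not already a maximal ST-set), before arguing that \textsc{Cass}, trying all leaves and edge pairs and using the dummy taxon to realise indegree-3 reticulations, reconstructs such a network. The remaining differences (your explicit induction on $k'$ versus the paper's direct two-level argument, and your slightly off gloss on the dummy taxon's purpose) are presentational rather than substantive.
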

\begin{proof}
The general idea of the proof is as follows. Details have been omitted due to space constraints. Assume~$k\leq 2$. It is clear that any (simple) level-$k$ network~$N$ contains a reticulation~$r$ with a leaf, say labeled~$x$, as child. Let $N\setminus \{x\}$ denote the network obtained by removing the reticulation~$r$ and the leaf labeled~$x$ from~$N$. This network might contain one or more strict subtrees. By the definition of ST-set, the set of leaf-labels of each maximal strict subtree corresponds to an ST-set w.r.t. $\mathcal{C}\setminus\{x\}$. However, in general not each such set needs to be a \emph{maximal} ST-set. This is critical, because the total number of ST-sets can be exponentially large. Therefore, the main ingredient of our proof is the following. We show that whenever there exists a simple level-$k$ network representing~$\mathcal{C}$, there exists a simple level-$k$ network~$N'$ representing~$\mathcal{C}$ such that the sets of leaf-labels of the maximal strict subtrees of~$N'\setminus\{x\}$ are the maximal ST-sets w.r.t. $\mathcal{C}\setminus\{x\}$, with~$x$ the label of some leaf whose parent is a reticulation in~$N'$. This is clearly true for~$k=1$. For~$k=2$ we sketch our proof below.

Let us first mention that the actual algorithm is slightly more complicated than the pseudocode in Algorithm~\ref{alg:SLk}. Firstly, when {\cass}($k$) constructs a tree, it adds a new ``dummy'' root to this tree and creates an edge from this dummy root to the old root. Such a dummy root is removed before outputting a network. Secondly, whenever the algorithm removes a dummy taxon~$d$, it makes sure that it does not collapse in the previous step.

Suppose there exists some level-2 network representing~$\mathcal{C}$. It can be shown that any such network is simple and that there exists at least one binary such network, say~$N$. Since~$N$ is a binary simple ${\mbox{level-}2}$ network, there are only four possibilities for the structure of~$N$ (after removing leaves), see~\cite{TCBB2009}. These structures are called \emph{generators}. In each case, $N\setminus\{x\}$ contains at most two maximal strict subtrees that have more than one leaf. Furthermore, $N\setminus\{x\}$ contains exactly one reticulation~$r'$, below which hangs a strict subtree~$T_r$ with set of leaf-labels~$X_r$ (possibly, $|X_r|=1$ or~$|X_r|=0$).

First we assume that~$X_r$ is not a maximal ST-set w.r.t. $\mathcal{C}\setminus\{x\}$. In that case it follows that there is some maximal ST-set~$X$ that contains~$X_r$ and also contains at least one taxon labeling a leaf~$\ell$ that is not reachable by a directed path from the reticulation of $N\setminus\{x\}$. We can replace~$\ell$ by a strict subtree on~$X$ that represents $\mathcal{C}|X$. Such a tree exists because~$X$ is an ST-set. We remove all leaves that label elements of~$X$ and are not in this strict subtree. Since there are now no leaves left below the reticulation, we can remove this reticulation as well. It is easy to see that the resulting network is a tree representing $\mathcal{C}\setminus\{x\}$. Moreover, we show that in each case a leaf labeled~$x$ can be added below a new reticulation (possibly with indegree 3) in order to obtain a network~$N'$ that represents~$\mathcal{C}$. Since~$N'$ contains just one reticulation, it is clear that the maximal strict subtrees of~$N'\setminus\{x\}$ are the maximal ST-sets w.r.t. $\mathcal{C}\setminus\{x\}$. {\cass}$(2)$ reconstructs such a network with an indegree-3 reticulation by removing~$x$, removing a dummy taxon~$d$, constructing a tree, adding a leaf labeled~$d$ below a reticulation, adding a leaf labeled~$x$ below a reticulation, removing the leaf labeled~$d$ and contracting the (now redundant) edges between the two reticulations. Note that this works because {\cass}$(2)$ does not collapse in this case.

It remains to consider the possibility that~$X_r$ is a maximal ST-set w.r.t. $\mathcal{C}\setminus\{x\}$. In this case we modify network~$N$ to~$N'$ in such a way that also the other maximal ST-sets w.r.t. $\mathcal{C}\setminus\{x\}$ appear as the leaf-sets of strict subtrees in~$N'\setminus\{x\}$. We again use a case analysis to show that this is always possible in such a way that the resulting network~$N'$ represents~$\mathcal{C}$.
\qed
\end{proof}
\begin{lemma}\label{lem:runningtime}
\textsc{Cass} runs in time~$O(|\mathcal{X}|^{3k+2}\cdot |\mathcal{C}|)$, if~$k$ is fixed.
\end{lemma}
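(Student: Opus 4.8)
The plan is to bound the running time as a product of two quantities: the total number of candidate networks that \cass\ ever builds in the ``add leaf below reticulation'' step of Algorithm~\ref{alg:SLk}, and the cost of processing one such candidate (constructing it and testing whether it represents $\mathcal{C}$). I would fix the target level $k$ and first analyse one invocation of {\cass}$(k)$ on a connected, already-collapsed cluster set, i.e. on one component $\mathcal{C}_i$ of $IG(\mathcal{C})$; the top-level algorithm runs this once per component, and since the $\mathcal{C}_i$ are vertex-disjoint in $IG(\mathcal{C})$ the per-component costs sum to at most the single-set bound with $\mathcal{C}$ in place of $\mathcal{C}_i$. Iterating the target level over $0,1,\dots,k$, which is how \cass\ minimises the level, only multiplies everything by $k+1=O(1)$.

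Next I would set up the recursion. In Algorithm~\ref{alg:SLk} the parameter $k'$ drops by one at every recursive call and the recursion stops at $k'=0$, so the recursion tree has depth $k$; as every call branches over the $|\mathcal{X}|+1$ choices of $x$, it has $O(|\mathcal{X}|^{k})$ nodes. Throughout, every network in play is a partially built simple level-$\le k$ network on at most $|\mathcal{X}|$ taxa, and — crucially, since $k$ is constant — such a network has only $O(|\mathcal{X}|)$ nodes and edges; the decollapse step merely replaces ST-set-labelled leaves by subtrees on pairwise-disjoint taxon subsets, so it cannot push the size past $O(|\mathcal{X}|)$. Writing $\mathcal{N}(k')$ for the set returned by a call with parameter $k'$: the $k'=0$ call returns at most one tree, and a call with parameter $k'$ feeds each of the at most $|\mathcal{N}(k'-1)|$ networks returned by a recursive call into the inner loop, inserting a reticulation below each of the $O(|\mathcal{X}|^2)$ pairs of edges, so $|\mathcal{N}(k')|\le (|\mathcal{X}|+1)\cdot O(|\mathcal{X}|^{2})\cdot|\mathcal{N}(k'-1)| = O(|\mathcal{X}|^{3})\cdot|\mathcal{N}(k'-1)|$, which unrolls to $|\mathcal{N}(k')|=O(|\mathcal{X}|^{3k'})$. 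Hence one call with parameter $k'$ builds $O(|\mathcal{X}|^{3k'})$ candidate networks, and summing the contribution $(|\mathcal{X}|+1)^{k-k'}\cdot O(|\mathcal{X}|^{3k'})=O(|\mathcal{X}|^{k+2k'})$ over the $k+1$ levels shows that the whole execution builds $O(|\mathcal{X}|^{3k})$ candidate networks in total (the $k'=k$ level dominates).

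Then I would bound the work per candidate. Building $N$ from $N'$ by subdividing two edges and attaching a new reticulation and leaf is $O(|\mathcal{X}|)$. Testing whether $N$ represents $\mathcal{C}$ is the dominant cost: one enumerates the $O(2^{k})=O(1)$ trees displayed by $N$, lists the $O(|\mathcal{X}|)$ clusters of each such tree (each a set of size $O(|\mathcal{X}|)$), and, for each of the $|\mathcal{C}|$ input clusters, checks whether it coincides with one of those $O(|\mathcal{X}|)$ sets; this is $O(|\mathcal{X}|^{2}\,|\mathcal{C}|)$. Multiplying the $O(|\mathcal{X}|^{3k})$ candidates by this $O(|\mathcal{X}|^{2}\,|\mathcal{C}|)$ per-candidate cost gives $O(|\mathcal{X}|^{3k+2}\,|\mathcal{C}|)$. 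It then remains to check that every other cost is dominated: the $O(|\mathcal{X}|^{k+1})$ executions of \textsc{Collapse} and of the decollapse routine, each polynomial in $|\mathcal{C}|$ and $|\mathcal{X}|$ (maximal ST-sets are obtained by $O(|\mathcal{X}|)$ rounds of merging unseparated pairs, each round scanning all pairs against all clusters); constructing the base tree at $k'=0$; and Steps~1--4 of Section~\ref{sec:decomp} (building and decomposing $IG(\mathcal{C})$, constructing $T$, and grafting the $N_i$ into $T$). All of these are $\mathrm{poly}(|\mathcal{C}|,|\mathcal{X}|)$ and, because $|\mathcal{C}|\ge|\mathcal{X}|$ since all singletons are present, are absorbed into $O(|\mathcal{X}|^{3k+2}\,|\mathcal{C}|)$.

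The step I expect to be the main obstacle is precisely this last bookkeeping together with the bound on $|\mathcal{N}(k')|$: one must be sure that the set of saved networks cannot grow faster than $O(|\mathcal{X}|^{3k})$, that every intermediate network genuinely stays of size $O(|\mathcal{X}|)$ so that ``$O(|\mathcal{X}|^2)$ pairs of edges'' is legitimate at every level of the recursion, and that the auxiliary subroutines — in particular the computation of maximal ST-sets inside \textsc{Collapse} — are cheap enough not to inflate the exponent or add a further factor of $|\mathcal{C}|$.
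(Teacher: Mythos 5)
Your proposal is correct and follows essentially the same route as the paper's proof: bound each network by $O(|\mathcal{X}|)$ edges, derive the recurrence $|\mathcal{N}(k')|=O(|\mathcal{X}|^3)\cdot|\mathcal{N}(k'-1)|$ giving $O(|\mathcal{X}|^{3k'})$ networks per call, and charge $O(2^{k}\cdot|\mathcal{X}|^2\cdot|\mathcal{C}|)$ for each representation test, with all bookkeeping absorbed. The only difference is presentational (you count total candidates over the recursion tree where the paper runs a single induction on the per-call time), and you are somewhat more explicit about the costs the paper dismisses as "clearly" dominated.
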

\begin{proof}
Omitted due to space constraints. \qed
\end{proof}
\begin{theorem}
Given a set of clusters~$\mathcal{C}$, \textsc{Cass} constructs in polynomial time a level-2 network representing~$\mathcal{C}$, if such a network exists.
\end{theorem}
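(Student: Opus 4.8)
The plan is to assemble the three ingredients already established -- the decomposition theorem (Theorem~\ref{thm:decomp}), the correctness of the simple-network routine (Lemma~\ref{lem:level2}) and the running-time bound (Lemma~\ref{lem:runningtime}) -- and then to check that Steps~1--4 of the algorithm glue the pieces together correctly. First I would deal with correctness. Suppose some level-2 network representing~$\mathcal{C}$ exists. By Theorem~\ref{thm:decomp} there is such a network~$N$ satisfying the decomposition property w.r.t.~$\mathcal{C}$, and the proof of that theorem in fact exhibits, for each nontrivial connected component~$\mathcal{C}_i$ of~$IG(\mathcal{C})$, a simple level-$\leq 2$ network~$N_i$ representing the collapsed cluster set~${\mathcal{C}_i}'$ produced by Step~1.

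Next I would verify the two structural facts that make Lemma~\ref{lem:level2} applicable to~${\mathcal{C}_i}'$: that collapsing the maximal unseparated subsets of~$\mathcal{X}_i$ leaves every proper subset of~$\mathcal{X}({\mathcal{C}_i}')$ separated, and that it does not disconnect~$IG$, so that~$IG({\mathcal{C}_i}')$ is still connected. Both are routine consequences of the definitions, since collapsing an unseparated set neither creates nor destroys an incompatibility among the remaining clusters. Because a nontrivial component of~$IG(\mathcal{C})$ contains two incompatible clusters, no tree represents~${\mathcal{C}_i}'$, so the simple level-$\leq 2$ network~$N_i$ is in fact a simple level-1 or a simple level-2 network. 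Hence, running \textsc{Cass}$(1)$ and, if it returns nothing, \textsc{Cass}$(2)$ on~${\mathcal{C}_i}'$ produces a simple level-$\leq 2$ network representing~${\mathcal{C}_i}'$ by Lemma~\ref{lem:level2}; this is exactly what Step~2 does. (If~$\mathcal{C}$ itself is compatible, $IG(\mathcal{C})$ has no nontrivial component and Step~3 already outputs a tree.)

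Then I would argue that Steps~3 and~4 reassemble these networks into a level-2 network representing~$\mathcal{C}$. Step~3 builds the tree~$T$ on the contracted cluster set~$\mathcal{C}^*$, which is compatible by construction, and Step~4 substitutes each~$N_i$ for the node~$v_i$ and re-expands the collapsed taxa. I would check -- mirroring the analogous merge used for galled networks~\cite{HusonEtAl2009} -- that (i) every biconnected component of the output is either trivial or a biconnected component of some~$N_i$, so the output is level-2; and (ii) a cluster~$C\in\mathcal{C}$ is represented in the softwired sense by the output, because it is represented either by~$T$ (if~$C\in\mathcal{C}^*$) or by the relevant~$N_i$ together with the tree expanding the ST-sets (if~$C$ lies in some~$\mathcal{C}_i$), and choices at reticulations inside distinct~$N_i$ are made independently. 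Thus the ``represents~$\mathcal{C}$'' test performed when saving networks succeeds and \textsc{Cass} outputs a network with the claimed properties; if no level-2 network exists, some invocation of \textsc{Cass}$(k)$ returns~$\emptyset$ and failure is reported.

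Finally, for the running time, Lemma~\ref{lem:runningtime} gives that \textsc{Cass}$(k)$ runs in time~$O(|\mathcal{X}|^{3k+2}\cdot|\mathcal{C}|)$ for fixed~$k$, which for~$k\leq 2$ is polynomial; there are at most~$|\mathcal{C}|$ connected components, and Steps~1, 3 and~4 (building~$IG(\mathcal{C})$, finding maximal unseparated sets, constructing~$T$, and merging) are all clearly polynomial, so the whole procedure runs in polynomial time. I expect the main obstacle to be the bookkeeping in the second and third paragraphs: carefully checking that collapsing in Step~1 yields instances meeting the exact hypotheses of Lemma~\ref{lem:level2}, that ``simple level-$\leq 2$'' really does split into the two cases handled by \textsc{Cass}$(1)$ and \textsc{Cass}$(2)$, and that the softwired representation of every input cluster survives the Step~4 merge.
\qed
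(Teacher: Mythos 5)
Your proposal is correct and follows exactly the paper's own (one-line) proof, which derives the theorem from Theorem~\ref{thm:decomp}, Lemma~\ref{lem:level2} and Lemma~\ref{lem:runningtime}; the additional bookkeeping you supply about the hypotheses of Lemma~\ref{lem:level2} and the Step~3--4 merge is left implicit in the paper but matches its intended argument.
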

\begin{proof}
Follows from Lemmas~\ref{lem:level2} and~\ref{lem:runningtime} and Theorem~\ref{thm:decomp}. \qed
\end{proof}
We conclude this section by showing that for each~$r\geq 2$, there exists a set of clusters~$\mathcal{C}_r$ such that any galled network representing~$\mathcal{C}_r$ needs at least~$r$ reticulations, while \textsc{Cass} constructs a network with just two reticulations, which also represents~$\mathcal{C}_r$. This follows from the following lemma.

\begin{lemma}\label{lem:galledbad}
For each~$r\geq 2$, there exists a set~$\mathcal{C}_r$ of clusters such that there exists a network with two reticulations that represents~$\mathcal{C}_r$ while any galled network representing~$\mathcal{C}_r$ contains at least~$r$ reticulations.
\end{lemma}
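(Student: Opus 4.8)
The plan is to produce, for every $r\ge 2$, an explicit cluster set $\mathcal{C}_r$ on $\Theta(r)$ taxa together with a network of reticulation number exactly two that represents it, and then to prove that every galled network representing $\mathcal{C}_r$ has reticulation number at least $r$. For the construction I would start from a small \emph{core} gadget $\mathcal{D}$: a cluster set on a constant number of taxa whose incompatibility graph is connected, that is represented by a simple level-$2$ network in which the two reticulation vertices lie on a common directed path (a ``stacked'' level-$2$ structure), and that is \emph{not} represented by any galled network with only two reticulations. The instance of Figure~\ref{fig:example} is essentially such a gadget for a small value of $r$, and I would generalise it. I would then introduce fresh taxa $z_1,\dots,z_r$ together with $\Theta(r)$ additional clusters, each involving one $z_i$ and a carefully chosen window of the core taxa, arranged so that, on the one hand, a single stacked level-$2$ network can still absorb all of them (the two stacked reticulations get ``reused'' and the subtrees carrying the $z_i$ hang on backbone edges), while, on the other hand, the $z_i$-conflicts cannot be represented compactly by any galled network. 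An equivalent framing is to let $\mathcal{C}_r$ consist of $r$ mutually overlapping copies of a core conflict that a galled network is unable to merge.

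For the first claim I would exhibit the representing network $N^{*}_r$ explicitly, as a caterpillar-like level-$2$ network with the two stacked reticulations near the root and the $z_i$-subtrees hanging below the backbone, and verify via the softwired-cluster definition that every cluster of $\mathcal{C}_r$ is read off one of the (at most four) switchings of $N^{*}_r$. Since $\mathcal{C}_r$ has $\Theta(r)$ clusters and $N^{*}_r$ has $O(r)$ edges, this bookkeeping is routine and linear in $r$.

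The heart of the proof is the lower bound. Let $N$ be an arbitrary galled network representing $\mathcal{C}_r$. Because, by definition of ``galled'', the reticulations inside any single biconnected component of $N$ form an antichain under reachability, such a component can jointly represent only a restricted family of mutually incompatible clusters; informally, antichain reticulations generate ``parallel'' rather than ``nested'' conflicts. The technical core, which I expect to be the main obstacle, is to turn this intuition into a \emph{capacity bound}: (1)~show that a biconnected component of a galled network with $s$ reticulations, restricted to any fixed subset of its taxa, displays only a bounded-in-$s$ number of distinct trees of a special ``parallel'' form; (2)~show that the clusters of $\mathcal{C}_r$ whose supports lie in a given $z$-window force strictly more distinct displayed trees than one such component can supply, unless that component already contains one reticulation per $z$-index of the window; (3)~conclude, by assigning to each index $i$ a distinct reticulation edge of $N$ that is ``responsible'' for $z_i$, that $N$ has at least $r$ reticulations in total. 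Step~(3) uses that the reticulation number is additive over biconnected components, so it is irrelevant how $N$ spreads the conflicts across components. Combining the two claims proves the lemma; together with Lemma~\ref{lem:level2} and the decomposition framework of Theorem~\ref{thm:decomp}, it yields the promised separation: \textsc{Cass} represents $\mathcal{C}_r$ with only two reticulations, whereas every galled-network method needs at least $r$.
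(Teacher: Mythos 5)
There is a genuine gap here: what you have written is a plan with both of its load-bearing parts left open. The construction of $\mathcal{C}_r$ is never pinned down (``a carefully chosen window of the core taxa, arranged so that\dots'' is precisely the part that needs to be exhibited and verified), and the lower bound rests on a ``capacity bound'' that is neither proved nor, as stated, strong enough. If the capacity of a biconnected component with $s$ reticulations is measured by the number of distinct displayed trees, that number is at most exponential in $s$, so forcing ``more displayed trees than the component can supply'' yields only $s\geq\log(\text{number of forced trees})$; to get a \emph{linear} lower bound $s\geq r$ this way you would need the clusters to force exponentially many distinct displayed trees, which you neither claim nor arrange for. Step~(3), the injective assignment of a reticulation to each index $i$, is exactly the statement to be proved, so as written the argument is circular at its core.

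The paper's proof is much more direct and avoids any counting of displayed trees. Take $N$ to be the simple level-$2$ network obtained from generator $2a$ (Figure~\ref{fig:generators}) by placing $r$ leaves on each of the sides $B$, $C$ and $E$ and one leaf on side $F$, and let $\mathcal{C}_r$ be the set of \emph{all} clusters represented by $N$; representability by a two-reticulation network is then free, with no bookkeeping needed. For the lower bound, one checks that $IG(\mathcal{C}_r)$ (minus singletons) is connected, so any galled network $N'$ representing $\mathcal{C}_r$ has a single nontrivial biconnected component, and the galled condition then forces each of its $r'$ reticulations to have a leaf as its child. Deleting those $r'$ taxa from $\mathcal{C}_r$ must leave a \emph{compatible} cluster set (it is represented by the remaining tree). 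But if $r'<r$, each of the sides $B$, $C$, $E$ still contributes a surviving leaf $b,c,e$, and the surviving clusters include one containing $b,e$ but not $c$ and one containing $c,e$ but not $b$; these are incompatible, a contradiction. This ``delete the reticulation leaves and contradict compatibility'' step is the structural fact about galled networks that your proposal is missing, and it is what turns the argument into a one-paragraph pigeonhole rather than a capacity estimate.
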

\begin{proof}
Omitted due to space constraints. \qed
\end{proof}
\section{Practice}
\begin{table}[t]
\begin{centering}
 \begin{tabularx}{\textwidth}%
     {RR|RRR|RRR}
\hline
\multicolumn{2}{c}{Data} & \multicolumn{3}{c}{\textsc{GalledNetwork}} & \multicolumn{3}{c}{\textsc{Cass}} \\
$|\mathcal{C}|$ & $|\mathcal{X}|$ & $t$ & $k$ & $r$ & $t$ & $k$ & $r$\\
\hline
30  & 5   & 0$s$ & 6  & 6  & 1$s$        & 4 & 4\\
62  & 6   & 0$s$ & 8  & 8  & 7$s$        & 5 & 5\\
126 & 7   & 0$s$ & 10 & 10 & 28$s$       & 6 & 6\\
254 & 8   & 6$s$ & 12 & 12 & 4$m$ 3$s$   & 7 & 7\\
42  & 10  & 0$s$ & 4  & 4  & 6$s$        & 4 & 4\\
38  & 11  & 0$s$ & 7  & 7  & 14$s$       & 5 & 5\\
61  & 11  & 0$s$ & 6  & 6  & 47$s$       & 5 & 5\\
77  & 22  & 0$s$ & 9  & 9  & 36$s$       & 3 & 3\\
75  & 30  & 0$s$ & 11 & 11 & 5$s$        & 2 & 2\\
89  & 31  & 0$s$ & 16 & 16 & 27$m$ 32$s$ & 4 & 4\\
180 & 51  & 0$s$ & 11 & 11 & 30$s$       & 2 & 2\\
193 & 57  & 0$s$ & 1  & 4  & 1$s$        & 1 & 4\\
270 & 76  & 0$s$ & 16 & 16 & 4$m$ 52$s$  & 2 & 2\\
404 & 122 & 1$s$ & 2  & 2  & 21$m$ 10$s$ & 2 & 2\\
\hline
\textbf{135.8} & \textbf{31.9} & \textbf{1}$\bf{s}$ & \textbf{8.5} & \textbf{8.7} & \textbf{4}$\bf{m}$ \textbf{19}$\bf{s}$ & \textbf{3.7} & \textbf{3.9}\\
\hline
\end{tabularx}
\end{centering}\vspace{.2cm}\caption{Results of \textsc{Cass} compared to \textsc{GalledNetwork} for several example cluster sets with~$|\mathcal{C}|$ clusters and~$|\mathcal{X}|$ taxa. For each algorithm, the level~$k$ and reticulation number~$r$ of the output network are given as well as the running time~$t$ in minutes~$m$ and seconds~$s$ on a 1.67$Ghz$ 2$GB$ laptop. The last row gives the average values.}\label{tab:galled}
\end{table}
\begin{table}[t]
\begin{centering}
 \begin{tabularx}{\textwidth}%
     {lR|rR|rR|rRR}
\hline
\multicolumn{2}{c}{Data} & \multicolumn{2}{c}{\textsc{HybridNumber}} & \multicolumn{2}{c}{\textsc{HybridInterleave}} & \multicolumn{3}{c}{\textsc{Cass}}\\
& $|\mathcal{X}|$ & $t$ & $r$ & $t$ & $r$ & $t$ & $k$ &$r$\\
\hline
ndhF and phyB & 40 & 11$h$       & 14 & 23$s$       & 14  & 1$s$       & 4 & 8\\
ndhF and rbcL & 36 & 11$h$ 48$m$ & 13 & 3$s$        & 13  & 0$s$       & 3 & 8\\
ndhF and rpoC & 34 & 26$h$ 18$m$ & 12 & 6$s$        & 12  & 6$s$       & 5 & 9\\
ndhF and waxy & 19 & 5$m$ 20$s$  & 9  & 1$s$        & 9   & 1$s$       & 4 & 6\\
ndhF and ITS  & 46 & $> 2d$      & ?  & 4$m$ 18$s$  & 19  & $> 2d$     & ? & ?\\
phyB and rbcL & 21 & 1$s$        & 4  & 1$s$        & 4   & 0$s$       & 2 & 4\\
phyB and rpoC & 21 & 1$m$ 30$s$  & 7  & 1$s$        & 7   & 0$s$       & 3 & 4\\
phyB and waxy & 14 & 1$s$        & 3  & 1$s$        & 3   & 0$s$       & 2 & 3\\
phyB and ITS  & 30 & 10$s$       & 8  & 1$s$        & 8   & 1$s$       & 4 & 8\\
rbcL and rpoC & 26 & 15$h$ 12$m$ & 13 & 8$s$        & 13  & 10$s$      & 5 & 7\\
rbcL and waxy & 12 & 2$m$ 12$s$  & 7  & 1$s$        & 7   & 1$s$       & 4 & 4\\
rbcL and ITS  & 29 & $> 2d$      & ?  & 10$m$ 12$s$ & 14  & $> 2d$     & ? & ?\\
rpoC and waxy & 10 & 1$s$        & 1  & 1$s$        & 1   & 0$s$       & 1 & 1\\
rpoC and ITS  & 31 & $> 2d$      & ?  & 57$s$       & 15  & $> 2d$     & ? & ?\\
waxy and ITS  & 15 & 10$m$ 20$s$ & 8  & 1$s$        & 8   & 1$s$       & 4 & 5\\
\hline
\textbf{Average} & \textbf{23.2} & \textbf{5}$\bf{h}$ \textbf{22}$\bf{m}$ & \textbf{8.3} & \textbf{4}$\bf{s}$ & \textbf{8.3} & \textbf{2}$\bf{s}$ & \textbf{3.5} & \textbf{5.7}\\
\hline
\end{tabularx}
\end{centering}\vspace{.2cm}\caption{Results of \textsc{Cass} compared to two \textsc{HybridNumber} and \textsc{HybridInterleave} for several combinations of two input trees with~$|\mathcal{X}|$ the number of taxa the two trees have in common, with~$r$ the reticulation number,~$k$ the level and~$t$ the running time in hours~$h$, minutes~$m$ and seconds~$s$. The averages do not include the data sets for which \textsc{HybridNumber} and \textsc{Cass} did not find a solution within two days (denoted ``$>2d$'').}\label{tab:trees}
\end{table}
Our implementation of the \textsc{Cass} algorithm is available as part of the Dendroscope program~\cite{HusonEtAl2007}. To use \textsc{Cass}, first load a set of trees into Dendroscope. Subsequently, run the algorithm by choosing ``options'' and ``network consensus''. The program gives you the option of entering a threshold percentage~$t$. Only clusters that appear in more than~$t$ percent of the input trees will be used as input for \textsc{Cass}. Choose ``minimal network'' to run the \textsc{Cass} algorithm to construct a phylogenetic network representing all clusters that appear in more than~$t$ percent of the input trees.

\textsc{Cass} computes a solution for each biconnected component separately. If the computations for a certain biconnected component take too long, you can choose to ``skip'' the component, in which case the program will quickly compute the cluster network~\cite{HusonRupp2008} for this biconnected component, instead. Alternatively, you can choose to construct a galled network, or to increase the threshold percentage~$t$. For more information on using Dendroscope, see~\cite{HusonEtAl2007}.

We have tested \textsc{Cass} on both practical and artificial data and compared \textsc{Cass} to other programs. The results (using~$t=0$) are in Tables~\ref{tab:galled} and~\ref{tab:trees}. For the former table, several example data sets have been used, which have been selected in such a way as to obtain a good variation in number of taxa, number of clusters and network complexity. For each data set, we have constructed one network using \textsc{Cass}, which we call the \textsc{Cass}-network, and one galled network using the algorithm in~\cite{HusonEtAl2009}. Two conclusions can be drawn from the results. Firstly, \textsc{Cass} uses more time than the galled network algorithm. Nevertheless, the time needed by \textsc{Cass} can still be considered acceptable for phylogenetic analysis. Secondly, \textsc{Cass} constructs a much simpler network in almost all cases. For three data sets, the \textsc{Cass}-network and the galled network have the same reticulation number and the same level. For all other data sets, the \textsc{Cass}-network has a significantly smaller reticulation number, and also a lower level, than the galled network.

\begin{figure}[t]
  \centering
  \includegraphics[width=\textwidth]{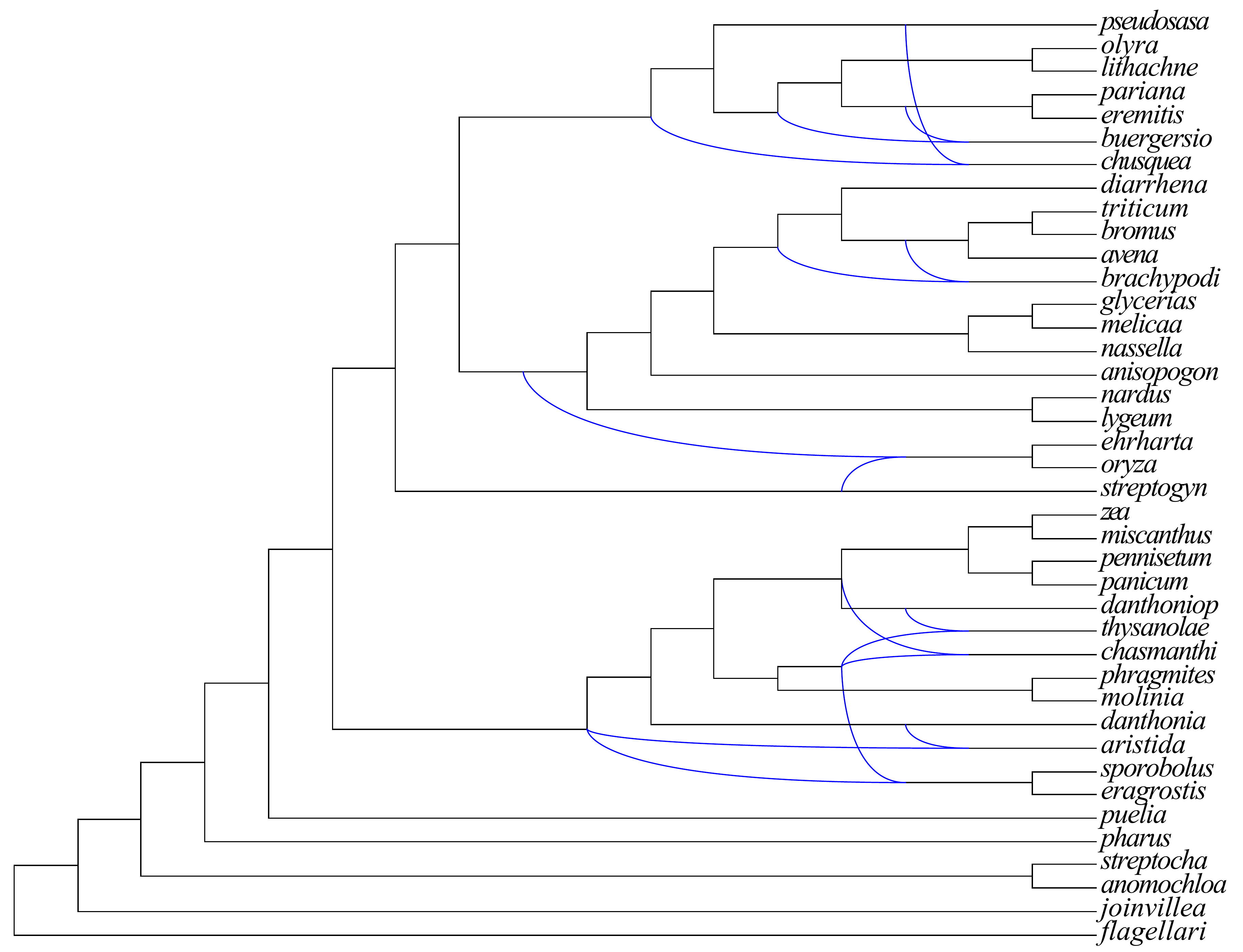}
  \caption{Level-4 network constructed by \textsc{Cass} for the ndhF and phyB trees of the \emph{Poaceae} grass data set, within 1 second.}
  \label{fig:grass}
\end{figure}
In Table~\ref{tab:trees} are the results of an application of \textsc{Cass} to practical data. This data set consists of six phylogenetic trees of grasses of the \emph{Poaceae} family, originally published by the Grass Phylogeny Working Group~\cite{grassgroup}. The phylogenetic trees are based on sequences from six different gene loci, ITS, ndhF, phyB, rbcL, rpoC and waxy, and contain 47, 65, 40, 37, 34 and 19 taxa respectively. We have compared the results of \textsc{Cass} with results of \textsc{HybridNumber}~\cite{BordewichEtAl2007}, which is a program that computes the minimum number of reticulations required to combine two phylogenetic trees (on the same set of taxa) into a phylogenetic network that displays both trees. Since \textsc{HybridNumber} can only be used for pairs of trees with identical taxon sets, we have also applied \textsc{Cass} to pairs of trees and restricted each of the two trees to the taxa that appear in both. Table~\ref{tab:trees} shows that the running time of \textsc{Cass} is much better than the running time of \textsc{HybridNumber}. Very recently, an improved version of \textsc{HybridNumber}, called \textsc{HybridInterleave}~\cite{CollinsEtAl2009}, was published, which is significantly faster than the original program.

Table~\ref{tab:trees} shows that \textsc{Cass} requires significantly fewer reticulations than \textsc{HybridNumber} and \textsc{HybridInterleave}. This is caused by the fact that the latter programs require that a network displays both input trees. The networks constructed by \textsc{Cass} do not necessarily display both input trees, but still represent all clusters from both trees, and use fewer reticulations to do so. Other advantages of \textsc{Cass} are that it can also be used for cluster sets that are obtained from more than two trees and for cluster sets obtained from trees on nonidentical sets of taxa. Moreover, while \textsc{HybridNumber} and \textsc{HybridInterleave} only compute the required number of reticulations, \textsc{Cass} also constructs an actual network. See for example Figure~\ref{fig:grass} for the output network of \textsc{Cass} for the ndhF and phyB trees of the \emph{Poaceae} data set.
\section{Discussion}
We have introduced the \textsc{Cass} algorithm, which can be used to combine any set of clusters into a phylogenetic network representing those clusters. We have shown that the algorithm performs well on practical data. It provides a useful addition to existing software, because it usually constructs a simpler network representing the same set of input clusters. Furthermore, we have shown that \textsc{Cass} provides a polynomial-time algorithm for deciding whether a level-2 phylogenetic network exists that represents a given set of clusters. This algorithm is more useful in practice than algorithms for similar problems that take triplets as input~\cite{TCBB2009,simplicity,JanssonEtAl2006,JanssonSung2006,ToHabib2009}, because the latter algorithms need at least one triplet for each combination of three taxa as input, while \textsc{Cass} can be used for any set of input clusters. Furthermore, \textsc{Cass} is also not restricted to two input trees on identical taxon sets, as the algorithms in~\cite{BordewichEtAl2007,CollinsEtAl2009,LinzSemple2009}. Finally, we remark that \textsc{Cass} can also be used when one or more multi-labeled trees are given as input. One can first compute all clusters in the multi-labeled tree(s) and subsequently use \textsc{Cass} to find a phylogenetic network representing these clusters. Several theoretical problems remain open. First of all, does \textsc{Cass} always construct a minimum-level network, even if this minimum is three or more? Secondly, what is the complexity of constructing a minimum level network, if the minimum level~$k$ is not fixed but part of the input? Is this problem FPT when parameterized by~$k$? Finally, it would be very interesting to design an algorithm that finds a network representing a set of input clusters that has a minimum reticulation number. So far, not even a nontrivial exponential-time algorithm is known for this problem.
\section*{Acknowledgements}
We thank Mike Steel for organizing the Cass workshop in the Cass Field Station in February 2009, where we started this work. Leo van Iersel was funded by the Allan Wilson Centre for Molecular Ecology and Evolution, Steven Kelk by a Computational Life Sciences grant of The Netherlands Organisation for Scientific Research (NWO) and Regula Rupp by the Deutsche Forschungsgemeinschaft (PhyloNet project).
\bibliographystyle{abbrv}

\begin{thebibliography}{10}

\bibitem{BordewichEtAl2007}
M.~Bordewich, S.~Linz, K.~S. John, and C.~Semple.
\newblock A reduction algorithm for computing the hybridization number of two
  trees.
\newblock {\em Evolutionary Bioinformatics}, 3:86--98, 2007.

\bibitem{BordewichSemple2007}
M.~Bordewich and C.~Semple.
\newblock Computing the minimum number of hybridization events for a consistent
  evolutionary history.
\newblock {\em Discrete Applied Mathematics}, 155(8):914--928, 2007.

\bibitem{ChoyEtAl2005}
C.~Choy, J.~Jansson, K.~Sadakane, and W.-K. Sung.
\newblock Computing the maximum agreement of phylogenetic networks.
\newblock {\em Theoretical Computer Science}, 335(1):93--107, 2005.

\bibitem{CollinsEtAl2009}
J.~Collins, S.~Linz, and C.~Semple.
\newblock Quantifying hybridization in realistic time.
\newblock To appear.

\bibitem{Gambette2009}
P.~Gambette.
\newblock Who's who in phylogenetic networks, 2009.
\newblock \url{http://www.lirmm.fr/~gambette/PhylogeneticNetworks/}.

\bibitem{grassgroup}
{Grass Phylogeny Working Group}.
\newblock Phylogeny and subfamilial classification of the grasses
  (\emph{Poaceae}).
\newblock {\em Annals of The Missouri Botanical Garden}, 88:373--457, 2001.

\bibitem{GusfieldEtAl2007}
D. Gusfield, V. Bansal, V. Bafna, and Y.~S. Song.
\newblock A Decomposition Theory for Phylogenetic Networks and Incompatible Characters.
\newblock \emph{Journal of Computational Biology}, 14(10): 1247-1272, 2007.

\bibitem{HusonKloepper2007}
D.~H. Huson and T.~H. Kl\"opper.
\newblock Beyond galled trees - decomposition and computation of galled
  networks.
\newblock In {\em Research in Computational Molecular Biology (RECOMB)}, volume
  4453 of {\em Lecture Notes in Computer Science}, pages 221--225, 2007.

\bibitem{HusonEtAl2007}
D.~H. Huson, D.~C. Richter, C.~Rausch, T.~Dezulian, M.~Franz, and R.~Rupp.
\newblock Dendroscope: An interactive viewer for large phylogenetic trees.
\newblock {\em BMC Bioinformatics}, 8:460, 2007. \url{http://www.dendroscope.org}

\bibitem{HusonRupp2009}
D.~H. Huson and R.~Rupp.
\newblock {\em Phylogenetic Networks}.
\newblock Cambridge University Press.
\newblock To appear.

\bibitem{HusonRupp2008}
D.~H. Huson and R.~Rupp.
\newblock Summarizing multiple gene trees using cluster networks.
\newblock In {\em Algorithms in Bioinformatics (WABI)}, volume 5251 of {\em
  Lecture Notes in Bioinformatics}, pages 296--305, 2008.

\bibitem{HusonEtAl2009}
D.~H. Huson, R.~Rupp, V.~Berry, P.~Gambette, and C.~Paul.
\newblock Computing galled networks from real data.
\newblock {\em Bioinformatics}, 25(12):i85--i93, 2009.

\bibitem{HuynhEtAl2005}
T.~Huynh, J.~Jansson, N.~Nguyen, and W.-K. Sung.
\newblock Constructing a smallest refining galled phylogenetic network.
\newblock In {\em Research in Computational Molecular Biology (RECOMB)}, volume
  3500 of {\em Lecture Notes in Bioinformatics}, pages 265--280, 2005.

\bibitem{TCBB2009}
L.~J.~J. van Iersel, J.~C.~M. Keijsper, S.~M. Kelk, L.~Stougie, F.~Hagen, and
  T.~Boekhout.
\newblock Constructing level-2 phylogenetic networks from triplets.
\newblock {\em IEEE/ACM Transactions on Computational Biology and
  Bioinformatics}, 2009.
\newblock To appear.

\bibitem{simplicity}
L.~J.~J. van Iersel and S.~M. Kelk.
\newblock Constructing the simplest possible phylogenetic network from
  triplets.
\newblock {\em Algorithmica}, 2009.
\newblock To appear.

\bibitem{JanssonEtAl2006}
J.~Jansson, N.~B. Nguyen, and W.-K. Sung.
\newblock Algorithms for combining rooted triplets into a galled phylogenetic
  network.
\newblock {\em SIAM Journal on Computing}, 35(5):1098--1121, 2006.

\bibitem{JanssonSung2006}
J.~Jansson and W.-K. Sung.
\newblock Inferring a level-1 phylogenetic network from a dense set of rooted
  triplets.
\newblock {\em Theoretical Computer Science}, 363(1):60--68, 2006.

\bibitem{LinzSemple2009}
S.~Linz and C.~Semple.
\newblock Hybridisation in nonbinary trees.
\newblock {\em IEEE/ACM Transactions on Computational Biology and
  Bioinformatics}, 6:30--45, 2009.

\bibitem{ToHabib2009}
T.-H. To and M.~Habib.
\newblock Level-$k$ phylogenetic networks are constructable from a dense
  triplet set in polynomial time.
\newblock In {\em Combinatorial Pattern Matching (CPM)}, volume 5577 of {\em
  Lecture Notes in Computer Science}, pages 275--288, 2009.

\end{thebibliography}

\pagebreak
\appendix
\section{Appendix}

\noindent\textbf{Lemma}~\ref{lem:level2}. Given a set of clusters~$\mathcal{C}$, such that~$IG(\mathcal{C})$ is connected and any~$X\subsetneq\mathcal{X}$ is separated, {\cass}$(1)$ and {\cass}$(2)$ construct a simple level-1 respectively a simple level-2 network representing~$\mathcal{C}$, if such a network exists.
\begin{proof}
We start by proving the following claim. We assume that networks do not contain biconnected components with only one outgoing edge (because such structures are highly redundant). Furthermore, in this proof we identify each leaf with the taxon it is labeled by, to shorten the notation.
\begin{claim}\label{clm:alwayssimple} Given a set of clusters~$\mathcal{C}$, such that~$IG(\mathcal{C})$ is connected and any~$X\subsetneq\mathcal{X}$ is separated, then any network $N$ representing $\mathcal{C}$ is simple and no two leaves in $N$ have the same parent. Additionally, if such a network
$N$ exists, then there also exists a \emph{binary} simple network $N'$ representing {\cset} which has the same level as $N$ and such that no two leaves of $N'$ have the same parent.
\end{claim}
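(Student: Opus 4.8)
Both ``moreover''-type facts about an arbitrary network $N$ representing $\mathcal{C}$ follow from one elementary observation that I would establish first: a cut-edge of a phylogenetic network is always a tree edge, and it induces a split of $\mathcal{X}$ that is compatible with \emph{every} cluster represented by $N$. Indeed, deleting a reticulation edge cannot disconnect a rooted network (the reticulation retains an in-edge and its tail retains a path to the root), so a cut-edge $e=(u,v)$ is a tree edge and therefore occurs, as a bridge, in every tree $T$ displayed by $N$. Writing $X$ for the set of taxa below $v$, the cut-edge property forces every taxon of $X$ to be reachable from the root only through $e$, so in each displayed $T$ the bridge $e$ separates precisely $X$ from $\mathcal{X}\setminus X$; thus $X$ is a hardwired cluster of $T$. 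Since the hardwired clusters of a tree are pairwise compatible and every $C\in\mathcal{C}$ is a hardwired cluster of some displayed tree, $X$ is compatible with every $C\in\mathcal{C}$.

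Given this, \emph{simplicity} is immediate: if a cut-edge of $N$ had a non-leaf head $v$, then, using the standing tidiness assumptions on $N$, the associated set $X$ is a non-empty proper subset of $\mathcal{X}$ that is not separated by $\mathcal{C}$, contradicting the hypothesis. For the \emph{sibling-leaf} statement, first note that the hypotheses force $|\mathcal{X}|\ge 3$, since for $|\mathcal{X}|\le 2$ no singleton can be separated; then, if leaves $x$ and $y$ had a common parent, both edges into $x$ and into $y$ are tree edges, so $x$ and $y$ are siblings in every displayed tree and hence every hardwired cluster of it contains both or neither of them. The same reasoning as above shows $\{x,y\}$ is compatible with every $C\in\mathcal{C}$, i.e.\ the proper subset $\{x,y\}$ is unseparated --- a contradiction.

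For the ``additionally'' part, I would take $N$ (already simple with no sibling leaves, by the above) and, as long as some node violates the binary degree bounds, \emph{expand} that node into a directed path: first split a node of both large in- and out-degree into a reticulation followed by a tree node, then break a node of large in- or out-degree into a path while redistributing the incident edges. Such expansions change neither acyclicity, nor the single root, nor the leaf labels, nor (under any switching) which taxa descend from which node, so the resulting binary network $N'$ still represents $\mathcal{C}$; and, applying the first part of the claim to $N'$, it is automatically simple with no sibling leaves. It remains to see that $N'$ has the \emph{same} level as $N$. Each individual expansion leaves the global reticulation number $|E|-|V|+1$ unchanged, so it suffices to argue that the nontrivial biconnected components can be expanded ``in place'': when the node $v$ being expanded lies in a nontrivial biconnected component $B$, its incident edges can be distributed along the new path so that every new vertex keeps at least one edge of $B$ --- which the no-sibling-leaves property makes possible --- and then $B$ persists as a single biconnected component of unchanged reticulation number.

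The step I expect to be the main obstacle is exactly this last one: ensuring that no reticulation gets ``split off'' into a smaller biconnected component during an expansion, which would decrease the level rather than preserve it. Making this rigorous calls for a clean structural description of a simple network meeting the hypotheses; concretely, it seems convenient to first invoke Theorem~\ref{thm:decomp} to assume the decomposition property and then use the connectedness of $IG(\mathcal{C})$ to reduce to a network whose reticulations all sit in a single nontrivial biconnected component, with the leaves pendant on distinct nodes of it --- after which the edge-redistribution bookkeeping becomes routine.
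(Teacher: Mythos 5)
Your proof is correct and follows essentially the same route as the paper's: cut-edges (and sibling leaves) force unseparated proper subsets of $\mathcal{X}$, contradicting the hypothesis, and the binary network is obtained by expanding high-degree nodes into short paths or chains. The one step you flag as the main obstacle --- that the expansions preserve the level --- is closed in the paper more cheaply than your biconnected-component bookkeeping: since each node has at most one pendant leaf edge (by the no-sibling-leaves part you just proved), the incident edges can be distributed so that no nontrivial cut-edge is created, hence the expanded network is again simple; a simple network has a single nontrivial biconnected component, so its level equals its reticulation number $|E|-|V|+1$, which each expansion visibly preserves. In particular there is no need to invoke Theorem~\ref{thm:decomp}, and your argument is otherwise slightly more explicit than the paper's (which does not spell out the sibling-leaf case for non-binary $N$).
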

\begin{proof}
If $N$ is not simple then it contains a cut-edge $(v_1, v_2)$ such that $v_2$ is not a leaf and some subset of the taxa $\mathcal{X}' \subsetneq \mathcal{X}$ is reachable by directed paths starting at $v_2$. Since we assume that networks do not contain biconnected components with only one outgoing edge, $|\mathcal{X}'| \geq 2$. Now, given that $\mathcal{X}'$ is below a cut-edge, it follows that for every cluster $C \in \mathcal{C}$ holds that either $\mathcal{X}' \subseteq C$ or $C \subseteq \mathcal{X}'$. So $\mathcal{X}'$ is unseparated, giving us an immediate contradiction. To prove the second half of the lemma we show how to obtain $N'$ from $N$ by expanding edges. First we deal with nodes $v$ that have both indegree and outdegree greater than 1. Here we replace the node $v$ by an edge $(v_1, v_2)$ such that the edges incoming to $v$ now enter $v_1$, and the edges outgoing from $v$ now exit from $v_2$. Subsequently nodes with indegree at most 1, and outdegree $d \geq 3$, can be replaced by a chain of $(d-1)$ nodes of indegree at most 1 and outdegree 2. Nodes with indegree $d \geq 3$ and outdegree 1 can be replaced by a chain of $(d-1)$ nodes of indegree 2 and outdegree 1. These transformations preserve the reticulation number of the network and do not introduce any nontrivial cut-edges (i.e. cut-edges that do not have a leaf as head), so the resulting network $N$ is a binary simple network with the same level as $N'$. Binary simple networks cannot contain sibling leaves, so we are done. \qed
\end{proof}
Before continuing we need some definitions. We say that a tree-edge $e = (v_1, v_2)$ (i.e. an edge where $v_2$ is not a reticulation) of a network $N$ is
\emph{contraction-safe} for $\mathcal{C}$ if $N$ represents $\mathcal{C}$ and there is no $C \in \mathcal{C}$ that is represented by $e$. (An edge $(u,v)$ of $N$ is said to represent a cluster $C$ if there exists a tree $T$ on $\mathcal{X}$ that is displayed by $N$, and such that $C$ consists of all taxa reachable by a directed path from $v$ in~$T$.) Clearly such an edge can be contracted to obtain a new network $N'$ that still represents $\mathcal{C}$.

We are now ready to prove the lemma. Suppose we are given a set of clusters~$\mathcal{C}$, such that~$IG(\mathcal{C})$ is connected and any~$X\subsetneq\mathcal{X}$ is separated.

It is clear that {\cass}$(0)$ will return in polynomial-time a tree $T$ representing $\mathcal{C}$, if it exists. In this case $T$ will
be the unique tree that represents {\cset} and which contains no contraction-safe edges.

We now show that {\cass}$(1)$ will return in polynomial time a simple level-1 network that represents ${\cset}$, if it exists. Suppose
then that such a network, $N$, exists. We assume that $N$ is a binary simple level-1 network. {\cass} will thus at some iteration correctly guess and remove the (unique) leaf $x$ whose parent is a reticulation in $N$. {\cass} will then construct the unique (and in general non-binary) tree $T$ that represents $\mathcal{C} \setminus \{x\}$ and which contains no contraction-safe edges. To complete the level-1 case it is necessary to show that $x$ can be hung back from two edges of $T$ (in the sense of lines 14-16 of the \textsc{Cass} pseudocode) to create a level-1 network representing $\mathcal{C}$. In \cite{TCBB2009} it is described how, after removal of leaves, a binary simple level-$k$ network always has a topology equal to one of the binary \emph{simple level-$k$ generators}, depicted in Figure \ref{fig:generators}. We repeat the following definition~\cite{TCBB2009}.

\begin{definition}
\cite{TCBB2009} A simple level-$k$ network $N$, for $k\geq 1$, is a network obtained by applying the following transformation (``leaf
hanging'') to some simple level-$k$ generator such that the resulting graph is a valid network:
\begin{enumerate}
\item replace each edge $X$ by a path and for each internal node $v$ of the path add a new leaf $x$ and an edge
$(v,x)$; we say that ``leaf $x$ is on side $X$''; and \item for each node $Y$ of indegree 2 and outdegree 0 add a
new leaf $y$ and an edge $(Y,y)$; we say that ``leaf $y$ is on side $Y$''.
\end{enumerate}
\end{definition}

\noindent
Consider in particular that $N$ is constructed from the unique level-1 generator~1. There are two cases. In the first case $N$ has leaves on both sides $A$ and $B$, in which case let $a$ (respectively $b$) be the leaf on side $A$ (respectively $B$) that is furthest from the root. We hang $x$ from the edges in $T$ that feed into $a$ and $b$ respectively, to obtain the network $N'$. We refer to the two corresponding reticulation edges as the $a$- (respectively, $b$-) \emph{reticulation edge}. Consider a cluster $C \in \mathcal{C}$. If $\{x, a\} \subseteq C$ or $\{x,a\} \cap C = \emptyset$ then we see that $N'$ represents $C$ because $T$ represented $C \setminus \{x\}$ and we can simply ``switch on'' the $a$-reticulation edge. If $\{x, a\} \cap C = \{a\}$ then $b\not \in C$, and $b \not \in C \setminus \{x\}$, so we can switch on the $b$-reticulation edge. If $\{x, a\} \cap C = \{x\}$ then $C$ was either a singleton (which is trivially represented) or $b \in C$, in which case we can switch on the $b$-reticulation edge. In the second and final case, assume without loss of generality that only side $A$ contains leaves. Let $a$ be defined as before. We hang $x$ back from the edge
feeding into $a$ in $T$, and from a new root node that we also connect to the old root of $T$. Clusters in $C$ of the form $\{x, a\} \subseteq C$ or $\{x,a\} \cap C = \emptyset$ are dealt with as before. If  $\{x,a\} \cap C = \{x\}$ then $C$ is a singleton. If $\{x,a\} \cap C = \{a\}$ then we can switch the reticulation edge leaving the new root on and we are done.

\begin{figure}[t]
  \centering
  \includegraphics[width=9cm]{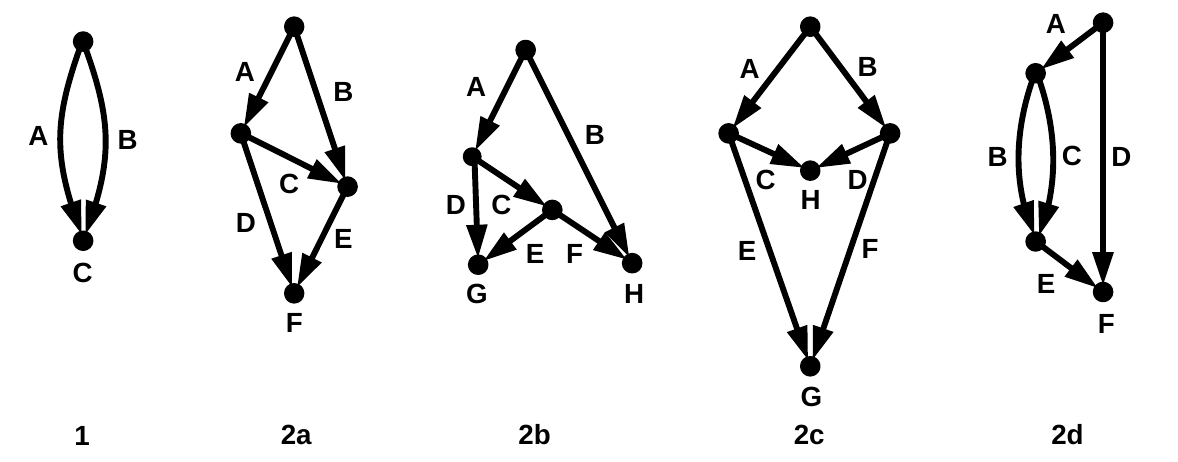}
  \caption{The single level-1 generator and the four level-2 generators.}
  \label{fig:generators}
\end{figure}

There remains only the case that there exists a simple level-2 network representing {\cset}, but no tree or simple level-1 network. This case is rather complex and
requires some extra terminology, although the central idea has much in common with the proof for simple level-1 networks that we have just presented. Given a network~$N$
and reticulation~$r$ with a leaf~$x$ as child, let
$N\setminus \{x\}$ denote the network obtained by
removing the
reticulation~$r$ and the leaf~$x$ from~$N$. We say that a network~$N$ is \emph{drooped with respect to {\cset}} if $N$ represents {\cset} and there exists a leaf~$x$ of~$N$ whose parent is a reticulation, such that the leaf-sets of maximal strict subtrees of $N\setminus\{x\}$ correspond to the maximal ST-sets of $\mathcal{C}\setminus\{x\}$.

We are going to first prove that, if there is a simple level-2 network $N$ representing {\cset}, then there exists a simple level-2 network $N'$ that is drooped w.r.t. {\cset}. We do this partially by case analysis on the four possible generator topologies for $N$ shown in Figure~\ref{fig:generators}. The general strategy will be to argue that either $N$ is already drooped, or that it can be transformed into some new level-2 network $N'$ representing {\cset}. If $N'$ is subsequently level-1 and/or not simple then we obtain a contradiction. Otherwise, $N'$ is (as we will demonstrate) a drooped simple level-2 network.

Consider any leaf~$x$ whose parent is a reticulation of $N$. Observe that $N\setminus\{x\}$ contains exactly one reticulation~$r'$, below which hangs a strict subtree~$T_r$ with leaves~$X_r$ (possibly, $|X_r|=1$ or~$|X_r|=0$). Note that, if $X_r$ is the empty set then $N$ contains an edge between two reticulations and contracting this edge leads to a network~$N'$ that is drooped w.r.t.~$\mathcal{C}$, because $N' \setminus \{x\}$ is a tree. We distinguish two major cases.\\
\\
\emph{First major case: $X_r$ is not a maximal ST-set w.r.t. $\mathcal{C}\setminus\{x\}$ and not the empty set.}\\
\\
In this case it follows that there is some maximal
ST-set~$X$ that contains~$X_r$ and also contains at least one leaf~$\ell$ that is not reachable by a directed path from the reticulation of $N\setminus\{x\}$. We can replace~$\ell$ by a strict subtree on~$X$ that represents $\mathcal{C}|X$. Such a tree exists because~$X$ is an ST-set. We remove all leaves in~$X$ that are not in this strict subtree. Since there are now no leaves left below the reticulation, we can remove this reticulation as well. Let~$N^*$ be the resulting network. It is easy to see that~$N^*$ is a tree representing $\mathcal{C}\setminus\{x\}$. We now require a case-analysis to show how the leaf~$x$ can be hung back into $N^*$ to obtain a network $N'$ that is drooped w.r.t. {\cset}.\\
\\
\noindent\textbf{Case generator 2a}\\
\\
Here we assume that the leaf~$x$ is equal to the leaf on side $F$ of $N$.

Let $p$ be (in $N$) the leaf on side $E$ that is furthest from the root. Such a leaf will definitely exist by assumption that $X_r$ is not empty. The common core of the construction, independent of the exact case, requires locating $p$ in $N^*$ and hanging $x$ back from the edge that feeds into $p$. We call this reticulation edge the $p$-edge. Depending on the exact case we will additionally hang back $x$ from one or two other places (i.e.
to create an indegree-2 or indegree-3 reticulation respectively). But note already that, however we add these additional reticulation edges, a cluster $C \in {\cset}$ such that $\{p,x\} \subseteq C$ or $\{p,x\} \cap C = \emptyset$ will definitely be represented by $N'$. The argument for this is identical to that used in the level-1 case. So we only need to worry about non-singleton clusters $C$ where $\{p,x\} \cap C = \{p\}$ or $\{p, x\} \cap C = \{x\}$. We hang a second reticulation edge of $x$ from the root of $N^*$ and call this the root-edge. Now, let $l$ be the leaf on side $D$ (of $N$) that is furthest from the root of $N$; if such a leaf does not exist then let $l$ be the leaf on side $C$ (of $N$) that
is furthest from the root of $N$, and if that also does not exist then let $l$ be the leaf on side $A$ (of $N$) that is furthest from the root of $N$. For brevity we will henceforth abbreviate this specification of $l$ to ``the lowest leaf on sides $D; C; A$''. If $l$ exists (in general it might not) then hang a third reticulation edge of $x$ from the edge in $N^*$ that feeds into $l$, call this the $l$-edge. Consider then a non-singleton cluster $C$ in {\cset} that contains $x$ but not $p$. Then $l$ exists and cluster $C$ definitely contained it, so $C \setminus \{x\}$ contained $l$ and was represented by $N^*$. So in $N'$ (the network we get after adding~$x$ below a reticulation) we can switch on the $l$-edge to obtain cluster $C$. Finally, consider a cluster $C$ that contained $p$, but not $x$. Then switching on the root-edge is sufficient. So $N'$ represents {\cset}.

If $N'$ is a level-1 network then we have a contradiction. Otherwise it is a drooped
simple level-2 network (because by the earlier claim all networks that represent $C$ are simple).\\
\\
\noindent\textbf{Case generator 2b}\\
\\
Here we assume that the leaf~$x$ is the leaf on side $G$ of $N$. Let $l$ be the lowest leaf on sides $D; A$ of $N$. Let $p$ be the lowest leaf on sides $E; F; C; A$. We hang $x$ back below a new reticulation of indegree 2 or 3.
More specifically: we hang $x$ from the edges that feed into $l, p$ and $h$: the leaf on side~$H$. If $l$ and $p$ both exist and $l \neq p$ then this
reticulation clearly has indegree-3. If $l=p$ or at least one of $l$ and $p$ does not exist then we also hang $p$ back from the root.
(The only situation when the reticulation has indegree-2 is thus if neither $l$ nor $p$ exists). Consider now clusters in {\cset}. We distinguish several cases: in the case that neither $l$ nor $p$ existed we get a level-1 network (and thus a contradiction), otherwise a drooped level-2 network.

Suppose that the leaf $l$ existed. Then clusters $C$ of the form $\{x,l\} \subseteq C$ or $\{x,l\} \cap C = \emptyset$ are (as in earlier cases) easy to deal with. For a non-singleton cluster $C$ such that $\{x,l\} \cap C = \{x\}$ holds that $p \in C$ and/or $h \in C$.
Then in $N'$ we can switch on the $p$-edge or the $h$-edge depending on which one is relevant. For a non-singleton cluster $C$ such that $\{l,x\} \cap C = \{l\}$ holds that $\{h,p\} \cap C = \emptyset$, so in $N'$ we can switch the $h$-edge on.

Suppose, alternatively, that the leaf $p$ existed. Again, ``both $p$ and $x$'' and ``neither $p$ nor $x$'' clusters are easy to deal with. So, consider a cluster that contains $x$ but not $p$. Then this cluster will contain either $l$ or $h$ and we are done. Consider a cluster that contains $p$ but not $x$. If $l$ exists then it will not be in the cluster, so we are done. If $l$ does not exist then we can use the root-edge in $N'$ and we are done.

The final case is that neither $l$ nor $p$ exists. Clusters that contain $h$ and $x$, or neither $h$ nor $x$, are easy to deal with. So, consider a
non-singleton cluster that contains $x$ but not $h$. But the sides $D, A, E, F, C$ are all empty, meaning that the cluster is a singleton, contradiction. For clusters that contain $h$ but do not contain $x$ we can use the root-edge in $N'$, and we are done.\\
\\
\noindent\textbf{Case generator 2c}\\
\\
Here we assume that the leaf~$x$ is the leaf on side $G$. We assume without loss of generality that $N$ contains at least three leaves.
(If $N$ contains only two leaves then $N$ is clearly already drooped).
Let $l$ be the lowest leaf (in $N$) on $E; C; A$. Let $p$ be the lowest leaf (in $N$)
on $F; D; B$.
Note that (by the assumption that there are at least three leaves in $N$) at least one of $l$ and $p$ will exist. If they both exist then hang $x$ back from the edges feeding into $l, p$ and $h$: the leaf on side~$H$. If (without loss of generality) only $l$ exists then hang $x$ back from the edges feeding into $l, h$ and also the root. Consider then non-singleton clusters in {\cset} that contain $x$ but not $l$. Then the cluster either contains $h$ or $p$, and we are done. Finally consider a cluster that contains $l$ but not $x$. If $p$ exists then it will not be in the cluster, so we are done. If $p$ does not exist then we can use the root-edge.\\
\\
\noindent\textbf{Case generator 2d}\\
\\
As in case 2a we assume that there is at least one leaf on side $E$ (because otherwise $N$ was already drooped).  We assume that the leaf $x$ we removed is the leaf on side
$F$ of $N$. Let $l$ be the lowest leaf on side $E$ in $N$ (which must exist). Let $p$ be the lowest leaf on side $D$ in $N$. If $p$ exists then we can hang back $x$ from
the edges feeding into $l$ and $p$. Otherwise we hang $x$ back from the edge feeding into $l$, and the root. Consider then non-singleton clusters in {\cset} that contain
$x$ but not $l$. Then $p$ must exist, and the cluster must contain it. For clusters that contain $l$ but not $x$ we can either use the $p$-reticulation edge (because it is
not possible to contain $l$ and $p$ but not $x$) or use the root reticulation edge in $N'$. $N'$ is thus level-1, and we have a contradiction.

This concludes the case analysis for the generators 2a, 2b, 2c and 2d for the first major case i.e. when~$X_r$ is not a maximal ST-set or the empty
set.\\
\\
\emph{Second major case: $X_r$ \emph{is} a maximal ST-set w.r.t. $\mathcal{C}\setminus\{x\}$.}\\
\\
Here we argue that either $N$ is already drooped w.r.t $\mathcal{C}$ (i.e. for some reticulation leaf $x$ not only $X_r$, but also all other
maximal ST-sets of $\mathcal{C}\setminus \{x\}$, correspond to strict subtrees of $N \setminus \{x\}$)
or that it is possible to transform $N$ into a network $N'$ with this property.
We again use a generator-based case analysis.\\
\\
To shorten the proofs we introduce abbreviations for several commonly-used concepts. If we say \emph{``hang $x$ back from $l_1, \ldots, l_i$''} $(i \geq 2)$
where
$l_1, \ldots, l_i \in \mathcal{X}$ we mean: (1) introduce $x$ into the network as the only child of a new reticulation $r_x$, (2) for each $l_j$ subdivide the
unique edge feeding into $l_j$ to create a new node $v_j$, and finally (3) for each $l_j$ we add the reticulation edge $(v_j, r_x)$. If we
say \emph{``hang $x$ back from $l_1, \ldots, l_i$ and the root''} $(i \geq 1)$ this is defined identically except that after steps (1)-(3) we additionally add
an edge with head $r_x$ and tail the root. As in earlier proofs we will make heavy use of the fact that, if $x$ is hung back from (amongst others) $l_j$ to
obtain a network $N'$, then clusters $C \in \mathcal{C}$ for which $C \cap \{l_j, x\} = \{l_j, x\}$ or $C \cap \{l_j, x\} = \emptyset$, are easily seen
to be represented by $N'$. That is because in such cases the reticulation edge $(v_j, x)$ can simply be ``switched on''. Clusters $C$ of the form
$C \cap \{l_j, x\} = \{l_j\}$ or $C \cap \{l_j, x\} = \{x\}$ require a little more work and in each case will be verifiable by inspection.

If we \emph{``tidy up''} a network we repeatedly apply the following five steps until none is applicable: (1) delete unlabeled nodes with outdegree 0; (2)
suppress
nodes with indegree and outdegree 1 (i.e. contract one edge incident to the node); (3) replace multiple edges by single edges, (4) remove the root if it has
outdegree 1 and (5) contract biconnected components that have only one outgoing edge. Note that tidying up does not affect the set of clusters that a
network represents and is simply a housekeeping measure.

	If we say, \emph{``move maximal ST-sets below cut-edges''} we refer to the following (fundamental) procedure. Suppose $N \setminus \{x\}$ represents
$\mathcal{C} \setminus
\{x\}$. Suppose there is a non-singleton maximal subset $S$ of $\mathcal{C} \setminus \{x\}$ that does not correspond to a strict subtree of $N \setminus\{x\}$. Then we can pick any leaf $l$ of $S$, delete all leaves
$l' \in S \setminus \{ l \}$ from $N \setminus \{x\}$, replace~$l$ with the unique tree that represents precisely those clusters in $\mathcal{C}|S$, and tidy up. This creates a new network in which $S$ \emph{does} appear as a strict subtree and (because $S$
is an ST-set) still represents $\mathcal{C} \setminus \{x \}$. We can repeat this process until all such $S$ appear as strict subtrees of the final network. In other words, until every maximal ST-set is equal to the set of leaves reachable from some cut-edge. Note that, crucially, this procedure will not affect singleton maximal ST-sets or (in this case) $X_r$ because these already correspond to strict subtrees of the network.

	We say \emph{``remove $x$ and transform''} to refer to the combined process of removing $x$ and its parent (from $N$) to obtain $N \setminus \{x\}$,
tidying this network up and subsequently moving all maximal ST-sets (of $\mathcal{C} \setminus \{x\}$) below cut-edges.\\
\\
\noindent\textbf{Case generator 2a}\\
\\
Let $f$ be the leaf on side $F$. Let $a, b, d, e$ be the leaf on respectively side $A, B, D, E$ that is furthest from the root. Let $c$ be the leaf on side $C$ that is
\emph{closest} to the root. Leaf $e$ must exist, because otherwise $N$ was already drooped. We take $x=f$. We distinguish two subcases. In one subcase we construct
a drooped network $N'$ by removing $x$ and transforming, and then hanging $x$ back in such a way that a network is created that represents all
clusters in $\mathcal{C}$. This network will be drooped
(because, prior to hanging $x$ back, we moved the maximal ST-sets of $\mathcal{C}\setminus \{x\}$ under cut-edges) and thus we are done. In the second case we
will show that $N$ was already drooped. For the first case, suppose that leaf $d$ exists. It is easy to see that (after  removing $x$ and transforming) hanging
$x$
back from $e$ and
$d$ creates a drooped network w.r.t. $\mathcal{C}$; the argumentation (e.g. regarding the four possibilities for $C \cap \{e,x\}$ for each $C \in \mathcal{C}$)
is identical to that
used in the previous proofs, and we are done. If $d$
does not exist, and $c$
does not exist, but $a$ does exist, then we hang $x$ back from $e$ and $a$, done. If none of $d,c,a$ exist we hang $x$ back from $e$ and the root, done. This
leaves
us with
the case that $d$ does not exist but $c$ does exist. We observe that hanging $x$ back from $e$ and $c$ creates a drooped network $N'$ that is consistent with all clusters
except with the possible exception of a cluster $C$ that (in $N$) contains $c$ and all leaves on side $E$, but not $f$ or any leaves from side $A$. If such a
cluster
does not exist then we are done.

Assume, then, that it does exist. Observe that if we had hung $x$ back from $e, c$ \emph{and} the root then we would have created a (potentially level-3)
drooped network w.r.t. $\mathcal{C}$. Let $N^{*}$ be the network obtained by tidying up $N \setminus \{x\}$.
We observe that $\mathcal{C}\setminus \{x\}$ contains at most one non-singleton maximal ST-set that is not equal to $X_r$.
Suppose the opposite was true i.e. that
$\mathcal{C}\setminus \{x\}$ contained at least two non-singleton maximal ST-sets not equal to $X_r$. If we then moved maximal ST-sets under cut-edges we would
create a
network with at least two nontrivial cut-edges (excluding the cut-edge associated with the strict subtree corresponding to $X_r$). Hanging $x$ back from $e, c$ and the
root in this network would create a network $N'$ that represents $\mathcal{C}$ but which contains at least one nontrivial cut-edge. But the set of leaves reachable by a
directed path from such a nontrivial cut-edge forms an unseparated subset of $\mathcal{X}$, which by assumption is not possible. Now, if $\mathcal{C} \setminus \{x\}$
contains no non-singleton maximal ST-sets not equal to $X_r$ then we are immediately done, because $N$ was already drooped. So we conclude that there is
exactly one non-singleton maximal ST-set $S$ of $\mathcal{C} \setminus \{x\}$ that is not equal to $X_r$, and that this must contain $c$. Note that, because of
the
existence of cluster $C$ (in particular the fact that cluster $C$ contains leaves from side $E$, and that $X_r$ is already a maximal ST-set), $S$ must be
entirely contained within the leaves of side $C$ (in $N$). $S$ must thus contain at least two leaves $c_1, c_2$ on side $C$. Let $c_1$ and $c_2$ be the leaves
in $S$ that are furthest from the root, with $c_2$ furthest away. Some cluster $C'$ separated $c_1$ from $c_2$ in $\mathcal{C}$ and this proves that $c_1$ and
$c_2$ are the last two leaves on side $C$. (Otherwise $C'\setminus \{x\}$ would prevent $S$ from being a maximal ST-set of $\mathcal{C}\setminus \{x\}$). We
conclude (again, by the separation of $c_1$ and $c_2$) that $C'$ contained leaves from side $E$. But then $C'\setminus\{x\}$ prevents $S$ from being
a maximal ST-set of $\mathcal{C}\setminus\{x\}$. We conclude thus that $\mathcal{C} \setminus \{x\}$ actually contains no non-singleton
maximal ST-sets, with the possible exception of $X_r$. So $N$ was already drooped.\\
\\
\noindent\textbf{Case generator 2b}\\
\\
Let~$g,h$ be the leaves on sides~$G,H$ respectively. Let $a$ be the leaf on side $A$ furthest from the root, and define $b, \ldots , f$ similarly. If $f$ exists
then take $x=h$. We remove $x$ and transform and then hang $x$ back from $f$ and $b$ (if $b$ exists) or otherwise from
$f$ and the root. A simple case-analysis shows that the resulting network is drooped. So assume that side $F$ has no leaves.
If leaf $e$ exists then take $x=g$, remove $x$ and transform as above, and hang $x$ back from $e$ and $d$ if they both exist, otherwise $e$ and the root. This
again gives a network that is drooped w.r.t. $\mathcal{C}$.
So assume side $E$ also contains no leaves. Suppose that side $C$ contains no leaves. Then we take $x=h$ and (after removing $x$ and transforming) hang back
from $b$ and $g$ (if $b$ exists) and otherwise from $g$ and the root. So there is at least one leaf on side~$C$. If $c$ is
the only leaf on side $C$ and none of the clusters $\{c,g\}, \{c,h\}, \{c,g,h\}$ are in $\mathcal{C}$, then we can safely move $c$ to the top of side $D$,
and we are back in the case when side $C$ contains no leaves, done. If side $C$ contains more than one leaf then at least one of $\{c,g\}, \{c,h\}, \{c,g,h\}$
has
to be in $\mathcal{C}$, because otherwise $c$ is unseparated from the leaf immediately above it. If $\{c,g\} \in \mathcal{C}$ then take $x=h$,
otherwise take $x=g$. First suppose $x=h$. Observe that $\{c\}$ is a maximal
ST-set in $\mathcal{C}\setminus\{x\}$ because $\{c,g\} \in \mathcal{C}\setminus \{x\}$ and (by assumption) $\{g\}$ is a maximal ST-set
in $\mathcal{C}\setminus\{x\}$. Thus, when we remove $x$ and transform, then neither $c$ nor $g$ will move in the sense of moving maximal ST-sets under
cut-edges. This is a critical fact. So, after the removal of $x$ and transformation the path of length two in $N$ between the parent $c'$ of $c$ and the parent $g'$ of $g$ will have been suppressed (by the tidying up) to become a single edge $(c', g')$. We will now further expand the current notion of ``hanging $x$ back'', which already defines ``hanging back'' from leaves and the root, to also define hanging back from an edge $(u,v)$. When we hang $x$ back from the edge $(u,v)$ we subdivide $(u,v)$ to create the new node $u'$ and add the reticulation edge $(u', r_{x})$ (where $r_x$ is defined as earlier). We will hang $x$ back from $b$ and $(c',g')$ (if $b$ exists) and otherwise from $(c',g')$ and the root. By inspection it can be verified that this gives a drooped network that represents $\mathcal{C}$. Symmetrically, if $x=g$ then we hang $x$ back from $d$ (if it exists, otherwise the root) and $(c', h')$ where $c'$ is the parent of $c$ and $h'$ is the parent of $h$ (because, again, $\{c\}$ and $\{h\}$ are maximal ST-sets that do not move in the sense of moving maximal ST-sets under cut-edges). Again we obtain a drooped network w.r.t. $\mathcal{C}$, and we are done. Note that the added complexity of this proof comes from the possible presence of clusters $\{g,h\}$ in the input: this is why we have to hang one of the two reticulation edges of $x$ from a carefully identified edge, rather than (as usual) a leaf or the root.\\
\\
\noindent\textbf{Case generator 2c}\\
\\
Assume $a, \ldots, g, h$ are defined as in case 2b. Let $l$ be the lowest leaf (in $N$) on $E; C; A$. Let $p$ be the lowest leaf (in $N$)
on $F; D; B$. At least one of $l$ and $p$ will exist, because we assume that $N$ has at least three leaves. (Otherwise it is already
drooped). Suppose $\{g,h\} \not \in \mathcal{C}$. Then we can take $x=g$, remove $x$ and transform,
and then hang $x$ back from $l$ and $p$ (if they both exist), or from $l$ and the root (if~$l$ exists), or from $p$ and the root (if~$p$ exists).
This will give a drooped network w.r.t. $\mathcal{C}$.
So assume $\{g,h\} \in \mathcal{C}$. We assume then, without loss of generality, that sides $D$ and $F$ contain no leaves. Suppose that side $B$ also contains
no leaves. In this
case we take $x=g$, remove $x$ and transform, then hang $x$ back from $l$ (which must exist) and the edge $(root, h'$) where $h'$ is the parent of $h$.
(Again, this edge was originally a path of length 2 in $N$ that was subsequently suppressed by the tidying-up operation).
This edge is definitely present because $\{h\}$ is (by assumption) a maximal ST-set of
$\mathcal{C} \setminus \{x\}$ and thus remains unaffected by the moving of maximal ST-sets under cut-edges.
This creates a drooped network. So we assume that side $B$ contains at least one leaf i.e. that leaf $b$ exists. If $b$ is the only leaf on side $B$ and none of
the clusters $\{b,g\},
\{b,h\}, \{b,g,h\}$ are in $\mathcal{C}$ then we could move $b$ to the top of side $A$ and we are back in the case that side $B$ contains no leaves, done. If
side $B$ contains
more than one leaf then at least one of those clusters must be present, otherwise $b$ and the leaf immediately above it were unseparated.
If $\{b,h\} \in \mathcal{C}$ take $x=g$,
otherwise take $x=h$. First suppose $x=g$. As in case 2b we argue that $\{b\}$ and $\{h\}$ are maximal ST-sets in $\mathcal{C} \setminus \{x\}$ and thus that
the edge (i.e. suppressed path) connecting the parent $b'$ of $b$ to the parent $h'$ of $h$ is unaffected
by the movement of maximal ST-sets. In this case we hang $x$ back from $(b',h')$
and from $l$ (if it exists: otherwise the root). It is not too difficult to verify that the resulting network is drooped w.r.t.
$\mathcal{C}$.
The case $x=h$ is almost entirely symmetrical except that we must redefine $l$ to be the lowest leaf on $C; E; A$ (instead of $E; C; A$).\\
\\
\noindent\textbf{Case generator 2d}\\
\\
Take $x=f$, where $f$ is the leaf on side $F$. Let $e$ be the leaf on side $E$ that is furthest from the root. (Leaf $e$ must exist because otherwise $X_r=\emptyset$). Let $d$ be the leaf on side $D$ that is furthest from the root. We remove $x$ and transform, and hang $x$ back from
$e$ and $d$ (if $d$ exists) and otherwise from $e$ and the root. This creates a drooped network.

This concludes the second major case, and we have thus proven that a drooped simple level-2 network $N'$ exists that represents {\cset}. To complete the overall proof we need to show that {\cass} will (re)construct $N'$, or some other simple level-2 network representing {\cset}. Suppose we contract all contraction-safe edges of $N'$ to obtain $N''$. In some iteration, {\cass} correctly identifies a leaf~$x$ whose parent is a reticulation~$r$ and the maximal ST-set~$X_r$ which is the set of leaves below the only reticulation~$r'$ of~$N''\setminus\{x\}$.

Let~$T''$ be the tree obtained by removing~$x,r,X_r$ and~$r'$ from $N''$ and contracting edges entering unlabeled tree-nodes with outdegree at most 1. We first show that~$T''$ is identical to the unique tree $T$ that represents ${\cset} \setminus (X_r \cup \{x\})$ and which contains no contraction-safe edges. ($T$ is the tree that {\cass} constructs in its innermost iteration). Suppose $T'' \neq T$. Then $T''$ will contain at least one contraction-safe edge, implying that $N''$ also contains at least one contraction-safe edge, yielding a contradiction.

If~$X_r\neq\emptyset$, one can reconstruct~$N''$ from~$T''$ as follows. First add a tree representing~$\mathcal{C}|X_r$ below a reticulation and subsequently add~$x$ below another reticulation. Notice however that {\cass} always adds the reticulations below nodes inserted into edges, while in~$N''$ a reticulation might be a child of a node~$v$ with indegree one and outdegree at least two (observe that~$v$ cannot be a reticulation because~$X_r\neq\emptyset$ and that~$v$ cannot have indegree 0 because {\cass} adds a dummy root with an edge to the old root). {\cass} adds the new reticulation below a node inserted into the edge entering~$v$ instead of below~$v$, which leads to a network~$N'''$ that also represents~$\mathcal{C}$.

To conclude the proof, consider the case~$X_r=\emptyset$. Assume without loss of generality that~$N''$ contains one reticulation with indegree 3 (if it contains two reticulations with indegree 2 then we can contract the edge between these reticulations). In this case, {\cass} constructs a network~$N'''$ representing~$\mathcal{C}$ from~$T''$ as follows. It first adds a dummy leaf~$d$ below a reticulation, then it adds~$x$ below another reticulation. This second reticulation is added below nodes inserted into the edge entering~$d$ and one other edge. Before outputting the network, {\cass} removes~$d$ and contracts the edges between the two reticulations. As in the previous case, whenever in~$N''$ the reticulation hangs below a node with outdegree at least two, {\cass} hangs the reticulation below a node inserted into the edge entering~$v$ instead. The resulting network~$N'''$ represents~$\mathcal{C}$. This concludes the proof.
\qed \end{proof}

\noindent\textbf{Lemma}~\ref{lem:runningtime}.
\textsc{Cass} runs in time~$O(|\mathcal{X}|^{3k+2}\cdot |\mathcal{C}|)$, if~$k$ is fixed.
\begin{proof}
Let~$n=|\mathcal{X}|$ and~$m=|\mathcal{C}|$. We analyze the running time of constructing a simple level-$\leq k$ network, since all other computations can clearly be done in~$O(n^{3k+2}\cdot m)$ time. We will show by induction on~$k'$ that a call to \textsc{Cass}$(\mathcal{C},\mathcal{X},k,k')$ takes at most~$O(n^{3k'+2}\cdot m)$ time and returns at most~$O(n^{3k'})$ networks, for fixed~$k'$. The lemma will follow because in the original call~$k'=k$.

For~$k'=0$, \textsc{Cass} only checks if there exists a tree representing~$\mathcal{C}$, which can clearly be done in~$O(n^2\cdot m)$ time and leads to at most one network. Suppose~$k'\geq 1$. The algorithm loops through~$O(n)$ taxa and at most~$O(n^{3k'-3})$ recursively created networks. For each network, the algorithm loops through all pairs of edges. For fixed~$k'$, each network contains~$O(n)$ edges, since a tree contains at most~$2n-2$ edges and the algorithm adds a constant number of edges in each iteration. For each combination of edges, \textsc{Cass} checks if the constructed network represents all clusters. This can be done by looping through the at most~$2^{k'}$ ways of switching edges on and off and checking if all~$m$ clusters are represented by one of the resulting trees, in~$O(2^{k'}\cdot m\cdot n^2)$ time. This is the bottleneck of all computations. Thus, the total time needed by \textsc{Cass} is~O$(n\cdot n^{3k'-3}\cdot n^2\cdot 2^{k'}\cdot m\cdot n^2)$, which is~$O(n^{3k'+2}\cdot m)$ for fixed~$k'$. Similarly, the number of constructed networks is at most~$O(n\cdot n^{3k'-3}\cdot n^2)$ and hence~$O(n^{3k'})$. \qed
\end{proof}

\noindent\textbf{Lemma}~\ref{lem:galledbad}.
For each~$r\geq 2$, there exists a set~$\mathcal{C}_r$ of clusters such that there exists a network with two reticulations that represents~$\mathcal{C}_r$ while any galled network representing~$\mathcal{C}_r$ contains at least~$r$ reticulations.
\begin{proof}
Consider a simple level-2 network~$N$ of type $2a$, with~$r$ leaves on each of the sides~$B$,~$C$ and~$E$ and a single leaf on side~$F$. Let~$\mathcal{C}_r$ be the set of all clusters represented by~$N$. Suppose that there exists a galled network~$N'$ representing~$\mathcal{C}_r$ and containing~$r'<r$ reticulations. It is easy to check that the incompatibility graph of~$\mathcal{C}_r$ (excluding singleton clusters) is connected and hence that~$N'$ contains just one biconnected component (except for the leaves). Thus, the~$r'$ reticulations of~$N'$ each have a leaf as child. Let~$\mathcal{C}'$ be the result of removing the~$r'$ taxa labeling these leaves from~$\mathcal{C}_r$. It follows that there exists a tree representing~$\mathcal{C}'$ and hence that~$\mathcal{C}'$ is compatible. However,~$\mathcal{C}'$ clearly contains at least one leaf on each of the sides~$B$,~$C$ and~$E$ of~$N$, say a leaf~$b$ on side~$B$,~$c$ on side~$C$ and~$e$ on side~$E$. Hence, there will be a cluster~$X\in\mathcal{C}'$ containing~$b$ and~$e$ but not~$c$ and a cluster~$Y\in\mathcal{C}'$ containing~$c$ and~$e$ but not~$b$. It follows that~$X$ and~$Y$ are incompatible, which is a contradiction because we have already shown that~$\mathcal{C}'$ is compatible. \qed
\end{proof}

\end{document}